\newcommand\ignore[1]{}
\newcommand{\schema}[1]{{\textbf{#1}}}
\let\oldnl\nl%
\newcommand{\nonl}{\renewcommand{\nl}{\let\nl\oldnl}}
\newcommand{\midtilde}{\raisebox{-0.25\baselineskip}{\textasciitilde}}
\theoremstyle{definition}
\newtheorem{definition}{Definition}[section]
\theoremstyle{plain}
\newtheorem{theorem}[definition]{Theorem}
\newtheorem{lemma}[definition]{Lemma}
\newtheorem{proposition}[definition]{Proposition}
\newtheorem{example}[definition]{Example}
\begin{document}
%

\title{Functional Dependencies Unleashed\\ for Scalable Data Exchange}
%
%
%
%
%

\numberofauthors{3} 
%
\author{
%
%
\alignauthor
Angela Bonifati\\
     \affaddr{\fontsize{10}{12}\selectfont University of Lyon 1}\\
     \email{\fontsize{10}{12}\selectfont angela.bonifati@univ-lyon1.fr}
\alignauthor
Ioana Ileana\\
       \affaddr{\fontsize{10}{12}\selectfont  Paris Descartes University}\\
       \email{\fontsize{10}{12}\selectfont ioana.ileana@parisdescartes.fr}
\alignauthor 
Michele Linardi\\
       \affaddr{\fontsize{10}{12}\selectfont Paris Descartes University}\\
       \email{\fontsize{10}{12}\selectfont michele.linardi@parisdescartes.fr}
}

\maketitle
\begin{abstract}
We address the problem of efficiently evaluating target functional
dependencies (fds) in the Data Exchange (DE) process. Target fds naturally
occur in many DE scenarios, including the ones in Life Sciences in which multiple source relations need to be structured under a constrained target schema. However, despite their wide use, target fds' evaluation is still a bottleneck in the state-of-the-art DE engines. 
Systems relying on an all-SQL approach typically do not support
target fds unless additional information is provided. Alternatively, DE engines that do include these dependencies typically pay the
price of a significant drop in performance and scalability. In this paper, we present a novel chase-based algorithm that can efficiently handle arbitrary fds on the target. 
Our approach essentially
relies on exploiting the interactions between source-to-target (s-t) tuple-generating dependencies (tgds) and target fds. This allows us to tame the
size of the intermediate chase results, by playing on a careful ordering of
chase steps interleaving fds and (chosen) tgds. As a direct consequence, we
importantly diminish the fd application scope, often a central cause of the
dramatic overhead induced by target fds. Moreover, reasoning on dependency
interaction further leads us to interesting parallelization opportunities,
yielding additional scalability gains. We provide a proof-of-concept
implementation of our chase-based algorithm and an experimental study
aiming at gauging its scalability with respect to a number of parameters,
among which the size of source instances and the number of dependencies of
each tested scenario. Finally, we empirically compare with the latest DE engines, and show that our algorithm outperforms them.

\end{abstract}




\section{Introduction}

Data Exchange (DE) is the process of transferring data from a source database to a target database by leveraging the schemas of such databases, along with a set of high-level assertions involving schema elements, also called {\em dependencies}.
Dependencies in DE scenarios most often belong to two fundamental classes: (i) source-to-target (s-t) tuple-generating dependencies (tgds), which relate elements of the source schema with elements of the target schema; (ii) target dependencies, which involve solely elements of the target schema. 
Target dependencies can in turn be distinguished into target tgds, including target foreign keys, and target equality-generating dependencies (egds), including primary keys and, more broadly speaking, functional
dependencies (fds) on the target schema. 
DE scenarios pertain to a plethora of application domains. 
Due to the increasing size of available data, achieving performance and scalability on these scenarios is becoming overly stringent. 
Scientific applications in particular bring up a strong need for efficiency, due to both the abundance of data and the large number of dependencies they may yield. 
Among these dependencies, target fds are often pivotal in the tasks of integrating and exchanging data from such applications.

Research on DE has led to a very wide and rich range of results \cite{Clio, Fagin200589, FaginKP05}. 
Many existing approaches have aimed at {\em rewriting} the dependencies, so as to further leverage the power of an external, general evaluation engine, such as a traditional RDBMS \cite{Clio, MarnetteMPRS11} or a Datalog engine \cite{GreenKIT07,georgPods14}. Alternatively, custom chase engines have been used and optimized for computing DE solutions \cite{GeertsMPS14, PichlerS10}.

Despite the abundance of previous work in the area, the efficient support for target fds remains an open problem. 
Indeed, some of the fastest current DE engines such as Clio and ++Spicy \cite{Clio,MarnetteMPRS11} rely on rewriting approaches from DE dependencies into sets of SQL queries. 
While these engines show high efficiency when dealing with s-t tgds, their support for target fds is importantly limited. 
The reason for this stands in the fact that fds (and general egds) are difficult to enforce using a first-order (FO) query language such as SQL. 
Indeed, it was first conjectured in \cite{CateCKT09}, and later proved in \cite{MarnetteMP10}, that one cannot provide a complete FO rewriting for a DE scenario with target egds. 
To circumvent this limitation, previous work \cite{MarnetteMP10} has identified specific cases where rewriting with target fds may be achieved, typically by further exploiting a set of source dependencies.

However, what about DE scenarios including general target fds and no source dependencies? 
We argue that these scenarios are yet interesting in practice in many cases in which the presence of source dependencies cannot be guaranteed, e.g. when data sources are denormalized or lacking schema constraints.  
These may concern for instance Life Sciences applications, where massive, heterogeneous source data from scientific instruments, experiments or simply  observations of physical and biological events has to be integrated into a common unified schema with key constraints. 
While such scenarios are beyond the scope of all-SQL systems, they can be in principle handled by DE engines that depart from the all-SQL-induced limitations \cite{GeertsMPS14}. 
These engines can by design cover broader classes of constraints and have been shown to be quite efficient on a variety of scenarios. However, when it comes to evaluating target fds, they unfortunately exhibit significant drops in efficiency. 
This is strongly related to the fact that fd evaluation is an expensive task, and special methods and algorithms are needed to alleviate its negative impact on performance. 

In this paper, we address the problem of efficiently evaluating target fds in the DE process. 
To this end, we introduce a novel chase-based algorithm aimed precisely at achieving {\em efficiency and scalability in the presence of target fds}. 
We place ourselves in a DE setting where dependencies are either s-t tgds or target fds, and no source dependencies are available. 
To meet its performance goal, our algorithm will essentially leverage the dependencies' interactions. 
Guided by these interactions, our chase sequences will interleave fd and (chosen) tgd steps, with the aim of producing and handling piecemeal the so-called Saturation Sets. 

The key observation is that once a Saturation Set is constructed and processed, it can be safely added to the solution and removed from any further processing tasks. 
Structuring the chase by Saturation Sets allows us then to importantly lower the size of the intermediate chase results. 
This in turn importantly reduces the fd application scope, whose extent is one of the main reasons behind the major overhead induced by target fds. 
Moreover, reasoning in terms of dependency interaction and instance partitioning calls for very interesting parallelization opportunities, and thus scalability gains. 
The practical efficiency of our algorithm coupled with the range of scenarios it is able to cover let it stand an advantageous comparison with the latest, state-of-the-art DE engines \cite{GeertsMPS14, MarnetteMPRS11}, as shown in our experimental assessment. \\

\noindent{\bf Contributions.} The main contributions of our paper are the following:

\begin{itemize}
\item We propose a novel chase-based algorithm for efficient and scalable Data Exchange in the presence of arbitrary target functional dependencies. 
Our algorithm essentially exploits source-to-target and target dependencies interactions. 
Based on these, we infer a chase step ordering that allows to significantly speed up the evaluation of target fds. 
Our dependency interaction analysis also yields interesting parallelization opportunities, further boosting efficiency. 
We prove that our algorithm is correct and produces universal solutions. 

\item We provide a proof-of-concept implementation of our algorithm and an extensive experimental study aimed at gauging its performance and scalability with respect to a number of parameters, among which the size of source instances and the number of dependencies. 
The results of these experiments and the comparison with the fastest existing DE engines allow us to confirm and validate our performance goal. Moreover, they contradict the common wisdom stating that custom DE engines are less efficient than all-SQL ones.  
\end{itemize}

The paper is organized as follows. Preliminary definitions are in Section
2. The main algorithms are in Section 3 and 4. Experimental results are in
Section 5. A discussion of related work is in Section 6. We conclude in Section 7.

\section{Preliminaries}

\begin{figure*}
		\begin{picture}(170,60)(0,-3)
		\scalebox{0.7}{	\node(A)(20,74){
				\begin{tabular}{|c|c|c|}
				\multicolumn{3}{l}{\textbf{(i) Source Instance I }}\\
				\multicolumn{3}{l}{}\\
				\multicolumn{3}{l}{\textbf{Active\_Researcher}}\\
				\hline
				{\it name}  & {\it surname}  & {\it age} \\ \hline
				\rowcolor[HTML]{C0C0C0}
				Ronald & Red & 60 \\ \hline
				John & Gray & 33 \\ \hline
				\end{tabular}
			}}
			\scalebox{0.7}{\node(B)(29,13){
					\begin{tabular}{|c|c|c|c|}
					\multicolumn{4}{l}{\textbf{Awarded\_Researcher}}\\
					\hline
					{\it name}  & {\it surname}  & {\it awardName} & {\it year}  \\ \hline
					\rowcolor[HTML]{C0C0C0}
				         John & Gray & \prize &	2014 \\ \hline
					Wallace & Blue & \prize &	1932 \\ \hline
					\rowcolor[HTML]{C0C0C0}
					Fredric & Brown & \prize &	1932 \\ \hline
					Marlon & Bold & \prize &	1954 \\ \hline
					\rowcolor[HTML]{C0C0C0}
					Marlon & Bold  & \prize &	1972 \\ \hline
					\end{tabular}
				}}
				\scalebox{0.7}{\node(C)(29,42){
						\begin{tabular}{|c|c|c|c|}
						\multicolumn{4}{l}{\textbf{Researcher\_Collaboration}}\\
						\hline
						{\it $name_1$}  & {\it $surname_1$}  & {\it $name_2$} & {\it $surname_2$}  \\ \hline
						\rowcolor[HTML]{C0C0C0}
						Ronald & Red & Matthew & Orange \\ \hline
						Fredric & Brown & Miriam & White \\ \hline
						\end{tabular}
					}}
					
					\scalebox{0.7}{\node(D)(115,56){
							\begin{tabular}{|c|c|c|c|}
							\multicolumn{4}{l}{\textbf{(ii) Pre-solution J' (does not enforce egds)}}\\
							\multicolumn{4}{l}{}\\
							\multicolumn{4}{l}{\textbf{Researcher}}\\
							\hline
							{\it name}  & {\it surname}  & {\it idRewarding}  & {\it idClub}\\ \hline
							\rowcolor[HTML]{C0C0C0}
							Ronald & Red  & N\textsubscript{1} & N\textsubscript{2}\\ \hline
							\rowcolor[HTML]{d7e6ed}
							John & Gray  & N\textsubscript{3} & N\textsubscript{4}\\ \hline
							\rowcolor[HTML]{C0C0C0}
							John & Gray  & N\textsubscript{5} & N\textsubscript{6}\\ \hline
							\rowcolor[HTML]{d7e6ed}
							Wallace & Blue & N\textsubscript{7} & N\textsubscript{8}\\ \hline
							\rowcolor[HTML]{C0C0C0}
							Fredric & Brown & N\textsubscript{9} & N\textsubscript{10}\\ \hline
							\rowcolor[HTML]{d7e6ed}
							Marlon & Bold & N\textsubscript{11} & N\textsubscript{12}\\ \hline
							\rowcolor[HTML]{C0C0C0}
							Marlon & Bold & N\textsubscript{13} & N\textsubscript{14}\\ \hline
							\rowcolor[HTML]{d7e6ed}
							Ronald & Red & N\textsubscript{15} & N\textsubscript{16}\\ \hline	
							\rowcolor[HTML]{d7e6ed}
							Matthew & Orange & N\textsubscript{17} & N\textsubscript{16}\\ \hline
							\rowcolor[HTML]{C0C0C0}
							Fredric & Brown & N\textsubscript{18} & N\textsubscript{19}\\ \hline
							\rowcolor[HTML]{C0C0C0}
							Miriam & White & N\textsubscript{20} & N\textsubscript{19}\\ \hline
							\end{tabular}
						}}
						
						\scalebox{0.7}{\node(E)(115,12){
								\begin{tabular}{|c|c|c|}
								\multicolumn{3}{l}{\textbf{Research\_Prize}}\\
								\hline
								{\it awardName}  & {\it year}  & {\it idResearcher} \\ \hline
								\prize & 2014 & N\textsubscript{5} \\ \hline
								\prize  & 1932 & N\textsubscript{7} \\ \hline
								\rowcolor[HTML]{C0C0C0}
								\prize  & 1932 & N\textsubscript{9} \\ \hline
								\rowcolor[HTML]{d7e6ed}
								\prize  & 1954 & N\textsubscript{11} \\ \hline
								\rowcolor[HTML]{d7e6ed}
								\prize  & 1972 & N\textsubscript{13} \\ \hline
								\end{tabular}
							}}

							\scalebox{0.7}{\node(F)(198,64){
									\begin{tabular}{|c|c|c|c|c|}
									\multicolumn{4}{l}{\textbf{(iii) Solution J}}\\
									\multicolumn{4}{l}{}\\
									\multicolumn{4}{l}{\textbf{Researcher}}\\
									\hline
								{\it name}  & {\it surname}  & {\it idRewarding}  & {\it idClub}\\
									\rowcolor[HTML]{C0C0C0}	
									John & Gray  & N\textsubscript{5} & N\textsubscript{6} \\ \hline
									Wallace & Blue & N\textsubscript{7} & N\textsubscript{8} \\ \hline
									\rowcolor[HTML]{C0C0C0}	
									Marlon & Bold & N\textsubscript{13} & N\textsubscript{14}  \\ \hline
									Ronald & Red & N\textsubscript{15} & N\textsubscript{16} \\ \hline
									Matthew & Orange & N\textsubscript{17} & N\textsubscript{16}\\ \hline
									\rowcolor[HTML]{C0C0C0}	
									Fredric & Brown& N\textsubscript{7} & N\textsubscript{19}  \\ \hline
									Miriam & White& N\textsubscript{20} & N\textsubscript{19} \\ \hline
									\end{tabular}
								}}
								
								\scalebox{0.7}{\node(G)(196,22){
										\begin{tabular}{|c|c|c|}
										\multicolumn{3}{l}{\textbf{Research\_Prize}}\\
										\hline
											{\it awardName}  & {\it year}  & {\it idResearcher} \\ \hline
										\rowcolor[HTML]{C0C0C0}
										\prize & 2014 & N\textsubscript{5} \\ \hline
										\prize & 1932 & N\textsubscript{7} \\ \hline
											\rowcolor[HTML]{C0C0C0}
										\prize & 1954 & N\textsubscript{13} \\ \hline
										\prize & 1972 & N\textsubscript{13} \\ \hline
										\end{tabular}
									}}

									\end{picture}
									
\vspace{-0.3cm}
\caption{\label{label:fig1}Data
Exchange scenario involving researchers, their prizes and their collaborations.}
\end{figure*}

We assume three disjoint countably infinite sets, $\Deltac$ a set of
constants, $\Deltan$ a set of labelled nulls, and $\mathit{Var}$ a set of variables. 
A {\em term} is either a constant, or a labelled null, or a variable.
A \emph{database schema} $\schema S$ is a finite set of relation names each with fixed arity.
An \emph{atom} is a tuple of the form $R(t_1, \ldots, t_n)$ where every
$t_i$ is a term (also called {\em attribute}), $R$ is a relation name,
and $n$ is the arity of $R$.  
A \emph{fact} is an atom where all $t_i$ are constants or nulls. 
A \emph{database instance} of $\schema S$ is a set of facts
using relation names from $\schema S$. 
A homomorphism  between two database instances $K_1$ and $K_2$ \cite{Fagin200589} is a mapping $h$ from $\Delta_c \cup \Delta_n$ to $\Delta_c \cup \Delta_n$ such that: (i) $h(c) = c$, for every $c \in \Delta_c$; (ii) for every fact $R_i(\bar{t})$ of $K_1$, $R_i(h(\bar{t}))$ is a fact of $K_2$, where $h((a_1,..., a_n))$ = $(h(a_1), . . . , h(a_ n))$.
A homomorphism from $K_1$ to $K_2$ is said to be full if $h(K_1)=K_2$. 
An isomorphism between $K_1$ and $K_2$ is a full injective homomorphism $h$ from $K_1$ to $K_2$ such that $h^{-1}$ is a full injective homomorphism from $K_2$ to $K_1$. 
A \emph{functional dependency} (fd) is a constraint of the form $R.A \rightarrow R.B$ where $R$ is a relation of arity $n$ and $A,B$ are subsets of $\{1,\ldots,n\}$ indicating positions of attributes of $R$.\\

\noindent{\bf Data Exchange setting.} 
Given disjoint schemas $\schema S$ ({\em source schema}) and $\schema T$ ({\em target schema}), in this paper we focus on DE scenarios characterized by a quadruple (mapping) $M = (\schema S, \schema T,\Sigmast$,$\Sigmat)$, where  $\Sigmast$ is a set of s-t tgds and $\Sigmat$ is a set of fds on the target schema. 
An s-t tgd is a first-order (FO) logic formula of the form $\forall
\bar{x}\bar{y}. \phi(\bar{x},\bar{y}) \rightarrow \exists\bar{z} \psi(\bar{y},\bar{z})$, where $\phi(\bar{x},\bar{y})$ is a conjunction of atomic formulas over $\schema S$, and $\psi(\bar{y},\bar{z})$ is a conjunction of atomic formulas over $\schema T$. 
An fd $R.A \rightarrow R.B$ in $\Sigma_t$ is expressed as an egd by the formulas $\forall \bar{x} R(x_1, \ldots, x_n) \wedge
R(x'_1,\ldots,x'_n) \to \bigwedge_{i \in B} x_i = x'_i$, where in the left-hand side, $x_j$ and $x'_j$ are the same variable for all $j$ in $A$.  
We refer to the left-hand side of a dependency as its \textit{body}, while the right-hand side is the dependency's \textit{head}. 
Given $M = (\schema S, \schema T,\Sigmast$,$\Sigmat)$ and a database instance $I$ on $\schema S$ (source instance), we say that an instance $J$ on $\schema T$ is a \emph{solution for $I$ and $M$} if the instance $I \cup J$ satisfies all the logical constraints from $\Sigmast$ and $\Sigmat$, with the standard
meaning of satisfiability for first-order logic.\\

\noindent {\bf Assignments}. 
A \emph{variable assignment} (or simply an \emph{assignment}) over a domain $\mathit{V} \subseteq \mathit{Var}$ is a mapping $a: \mathit{V} \to \Deltac \cup \Deltan$, from variables in $\mathit{V}$ to constants or labelled nulls.
A \emph{body assignment} for a dependency is an assignment whose domain consists in the set of all body variables of the dependency. 
A \emph{full assignment} for a dependency is an assignment defined on the set of both body and head variables. 
We artificially extend assignments to constants by taking $a(c_i)=c_i$ for all assignments $a$ and all constants $c_i$. 
An instance $I$ is said satisfying a body assignment \textit{a} for a tgd $\forall \bar{x},\bar{y}$ $\phi(\bar{x},\bar{y}) \rightarrow
\exists\bar{z} \psi(\bar{y},\bar{z})$ if $I \models \phi(a(\bar{x}), a(\bar{y}))$. 
We equivalently define satisfiability for egd assignments (note that for these body assignments are also full). 
The {\em head materialization} of a full tgd assignment $a$ is the set of facts $\psi(a(\bar{y}),a(\bar{z}))$. \\

\noindent{\bf Data Exchange solutions and the chase}. 
DE solutions are usually obtained by employing the chase procedure. 
Several chase variants and flavors have been explored in the literature, in both general and DE specific settings. 
Chase procedures operate as sequences of chase steps, comprising {\em conditions} and {\em results} of application. 
A {\em terminating} chase sequence on an instance $I$ consists in iteratively applying chase steps until reaching a point where no chase step further applies or the sequence has failed. 
To obtain DE solutions via the chase procedure, one typically has to (i) issue a successful terminating chase sequence starting from the source instance, then (ii) retain the part of the result corresponding to the target schema (hence, the target instance). 
Depending on the constraints, the source instance and the specific chase flavor, issues regarding termination may arise. 
However, if obtained, solutions generated by chasing enjoy an intrinsic property called the \emph{universality} \cite{Fagin200589}. 
We say that a solution $J$ for a mapping $M$ and a source instance $I$ is
universal ($J \in USol(M,I)$) iff for every solution $K$ there is a
homomorphism from $J$ to $K$. \\

\noindent{\bf The Oblivious Chase}. 
In this paper, we rely on one of the simplest chase variants, called the Oblivious (or Naive) Chase \cite{GrahneO14}.
We briefly describe below Oblivious Chase steps:
\begin{itemize} 
\item \textbf{Tgds.} 
An Oblivious Chase step with a tgd $t$ of the form $\forall \bar{x},\bar{y}$ $\phi(\bar{x},\bar{y}) \rightarrow
\exists\bar{z} \psi(\bar{y},\bar{z})$ applies on an instance $I$ if there exists a body assignment $a$ of $t$ such that (i) $I$ satisfies $a$ and (ii) $a$ has not already been used for a previous chase step with $t$ in the chase sequence. 
The result of applying the chase step is obtained by (i) extending $a$ to a full assignment by injectively assigning to each $z_i \in \bar z$ a fresh null in $\Delta_n$ and then (ii) adding the facts in $\psi(a(\bar{y}),a(\bar{z}))$ to $I$.  
\item \textbf{Egds}. 
An Oblivious Chase step with an egd $e$ of the form $\forall \bar{x}$ $\phi(\bar{x}) \rightarrow (x_{i}=x_{j})$ applies on an instance $I$ if there exists an assignment $a$ of $e$ such that (i) $I$ satisfies $a$ and (ii) $a(x_{i}) \neq a(x_{j})$. To define the application result of the chase step we distinguish three cases:
(i) if both $a(x_{i})$ and $a(x_{j})$ are constants, the chase fails; (ii) if exactly one of $a(x_{i})$, $a(x_{j})$ is a null, $I$ is modified by replacing all the occurrences of the null with the constant; (iii) if both $a(x_{i})$ and $a(x_{j})$ are nulls, $I$ is modified as in (ii) by picking one null at random \footnote{Or, one can assume a total ordering on labelled nulls and always choose the smallest one, as done in an alternative definition of the Oblivious Chase in \cite{CaliGK13}, that could also be seamlessly adopted here.}.
\end{itemize}

For solutions obtained by employing the Oblivious Chase in our DE setting comprising s-t tgds and target fds, no termination issues arise and the following holds \cite{GrahneO14}:

\begin{proposition}\label{prop:iso}
If one Oblivious Chase sequence fails then all sequences fail and there are no solutions for the DE scenario.

Else, a solution produced by the Oblivious Chase is universal and, moreover, unique up to isomorphism.
\end{proposition}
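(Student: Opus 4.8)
The plan is to split every Oblivious Chase sequence into the effect of its s-t tgd steps and the effect of its egd (fd) steps, and then to isolate a canonical object that every successful sequence reconstructs. First I would observe that, since s-t tgd bodies range over $\schema S$ while fd-egds only equate terms occurring inside $\schema T$-facts, no egd step can ever add, remove or alter a source fact; hence the set of body assignments of s-t tgds satisfied in $I\cup K$ is the same for every intermediate instance $K$ and equals the (finite) set of homomorphisms from tgd bodies into $I$. By obliviousness, in any sequence each such assignment is used exactly once, so only finitely many tgd steps occur; and each egd step either fails or strictly decreases the number of distinct nulls currently present, so — since tgd steps are finitely many and each creates finitely many nulls — every sequence terminates. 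I would then call $J'$ the instance obtained by firing all tgd assignments with pairwise distinct fresh nulls (a ``pre-solution'' in the sense of Figure~\ref{label:fig1}, well defined up to renaming of nulls), and note that the facts contributed by the tgd steps of any sequence are, before any merging, exactly a copy of $J'$.

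Next I would characterize what egd steps do to $J'$. Define $\sim$ to be the least equivalence relation on the terms of $J'$ that is a congruence for the facts of $J'$ and under which the quotient $J'/\!\sim$ satisfies every fd in $\Sigmat$; this is a well-defined least fixed point, since merging only decreases the number of classes and the fd-closure rule is monotone. I would then prove a commutation/confluence lemma: reading chase steps as a rewriting on instances, arbitrary interleavings are Church--Rosser modulo isomorphism, the decisive local case being that firing a tgd assignment (which only adds fresh target facts, with applicability fixed by the source) commutes with performing an egd merge, up to taking the fd-closure afterwards. It follows that every terminating sequence produces, on the target side, an instance isomorphic to $J'/\!\sim$ — provided $\sim$ never identifies two distinct constants. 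If $\sim$ does force a pair of distinct constants together, the same local confluence shows that this forced equality is reached by \emph{every} sequence, hence every sequence fails; this delivers the first sentence of the proposition together with the ``unique up to isomorphism'' clause.

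It remains to connect this dichotomy with the existence of solutions and to obtain universality. I would check that $J^{*}:=J'/\!\sim$ is a solution: it satisfies $\Sigmat$ by construction, and $I\cup J^{*}$ satisfies every s-t tgd because, for each body assignment $a$ into $I$, the head materialization used to build $J'$ maps under the quotient map to facts of $J^{*}$. For any solution $J$, the facts of $J'$ are images of tgd heads that $J$ must also satisfy, so there is a homomorphism $J'\to J$ that is the identity on constants; a routine induction on the construction of $\sim$ (using that homomorphisms fix constants and $J\models\Sigmat$) shows every $\sim$-pair is identified by this homomorphism, so it factors as $J'\to J^{*}\to J$, giving a homomorphism $J^{*}\to J$ and hence universality. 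Run contrapositively, the same induction shows that if any solution exists then no sequence can be forced to equate two distinct constants, so ``some sequence fails'' is equivalent to ``no solution exists''. The main obstacle I anticipate is the confluence lemma: making precise that arbitrary interleavings — in particular egd steps performed before all tgd steps have fired — all lead to the same $\sim$ and the same failure verdict. This is a standard but slightly delicate Church--Rosser argument, and the point that makes it go through is precisely that egd steps leave the source untouched, so tgd applicability is invariant and tgd steps commute with egd merges modulo fd-closure.
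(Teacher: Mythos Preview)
The paper does not prove Proposition~\ref{prop:iso}; it is stated as a known result with a citation to Grahne and Onet~\cite{GrahneO14} (see the sentence immediately preceding the proposition). There is therefore no in-paper proof to compare your proposal against.

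That said, your reconstruction is a sound proof of the statement in the restricted setting at hand (s-t tgds plus target fds). The decomposition into a fixed pre-solution $J'$ followed by the least fd-congruence $\sim$, the termination argument via a bounded pool of nulls, and the universality argument by factoring the canonical homomorphism $J'\to J$ through the quotient are exactly the standard ingredients, and they go through here precisely because tgd bodies live entirely over $\schema S$ while egds touch only $\schema T$-facts. The confluence step you single out as delicate is indeed the crux; in this setting it reduces cleanly to Newman's lemma once you have termination and local commutation of (i) two tgd steps, (ii) two egd merges, and (iii) a tgd step with an egd merge, the last being immediate because the tgd's fresh nulls are disjoint from anything the egd touches and the egd leaves $I$ unchanged.

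Two minor remarks. First, your termination measure (``each non-failing egd step strictly decreases the number of distinct nulls'') tacitly assumes the source instance $I$ is null-free; otherwise a source null could propagate into a target fact via a universally quantified variable and be renamed by an egd, which would also alter $I$. This assumption is standard in DE and implicit throughout the paper, but it is worth stating. Second, your phrase ``congruence for the facts of $J'$'' is slightly loose; what you need is simply the least equivalence on $\Deltac\cup\Deltan$ closed under the rule ``if two $R$-facts agree (modulo the current equivalence) on the $A$-positions of some fd $R.A\to R.B$, then equate their $B$-positions''. With that clarification the fixed-point and the factoring-through-$\sim$ arguments are routine.
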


\noindent \textbf{Pre-solutions and the Classical DE Chase}. A customary procedure for obtaining solutions in our DE setting consists in retaining only a subset of the possible solution-generating Oblivious Chase sequences. 
Namely, those that first chase with the s-t tgds, then with the target egds. 
This allows restricting the result to the target schema earlier in the process (after concluding the chase with the tgds), thus obtaining a {\em pre-solution}. 
We will hereafter refer to this chase variant as the {\em Classical Data Exchange Chase}.

\begin{example}
  \label{ex:running-example}
  Consider a DE scenario where $\cal M$ = ($\schema S$, $\schema T$,
  $\Sigmast$,$\Sigmat)$ with $\schema S$ and $\schema T$ illustrated in Figure~\ref{label:fig1}, and $\Sigmast$ and $\Sigmat$ as below. We omit the quantifiers, and use uppercase letters for
  existentially quantified variables and lowercase ones for universally quantified variables.
  \vspace{0.2cm}\\
  $\begin{array}{l}
    \mathbf{m_1}: Active\_Researcher(n,s,a)  \rightarrow Researcher(n,s,Y_1,Y_2) \\
\end{array}$\\
 $\begin{array}{l}
\mathbf{m_2}: Awarded\_Researcher(n',s',p',w')  \rightarrow 
\end{array}$\\
 $\begin{array}{l}
  \ \ Researcher(n',s',T,T_1) \wedge Research\_Prize(p',w',T) 
 \end{array}$\\
$\begin{array}{l}
\mathbf{m_3}: Researcher\_Collaboration(n'',s'',n''',s''')  \rightarrow 
\end{array}$\\
$\begin{array}{l}
 \ \ Researcher(n'',s'',E_1,E_2) \wedge Researcher(n''',s''',E_3,E_2)
\end{array}$\\
 $\begin{array}{l}
 \mathbf{e_1}: Researcher(n,s,p,w) \wedge Researcher(n,s,p',w') \rightarrow 
 \end{array}$\\
  $\begin{array}{l}
  \ \ \ (p = p') \wedge (w = w')
  \end{array}$\\
 $\begin{array}{l}
 \mathbf{e_2}: Research\_Prize(p,w,z) \wedge Research\_Prize(p,w,z')\rightarrow
 \end{array}$\\
 $\begin{array}{l}
   \ \ \ (z = z')
 \end{array}$\\

Figure ~\ref{label:fig1}(i) depicts the source instance $I$. By first chasing with the s-t tgds $m_1$, $m_2$, $m_3$ we obtain the {\em pre-solution} $J'$ in Figure \ref{label:fig1}(ii), where $N_1$-$N_{20}$ are fresh labelled nulls over the relations $Researcher$ and $Research\_Prize$. Next, by applying the egds (primary keys) $e_1$ and
$e_2$, we obtain the solution $J$ in Figure \ref{label:fig1}(iii).

\end{example}

\section{Optimizing the chase}

\begin{figure*}[t]
\begin{small}
\begin{center}
\begin{tabular}{|l|l|}\hline
{\em s-t tgd}
& {\em full tgd assignments} \\\hline\hline
$m_1$ & $a_1m_1$ = \{$n:Ronald$, $s:Red$, $a:60$, $Y_1:N_1$, $Y_2:N_2$\} \\
 & $a_2m_1$ = \{$n:John$, $s:Gray$, $a:33$, $Y_1:N_3$, $Y_2:N_4$\} \\\hline
$m_2$ & $a_1m_2$=\{$n':John$, $s:Gray$, $p':\prize$, $w':2014$, $T:N_5$, $T_1:N_6$\}\\
 & $a_2m_2$=\{$n':Wallace$, $s:Blue$, $p':\prize$, $w':1932$, $T:N_7$, $T_1:N_8$\} \\
 & $a_3m_2$=\{$n':Fredric$, $s:Brown$, $p':\prize$, $w':1932$, $T:N_9$, $T_1:N_{10}$\} \\
 & $a_4m_2$=\{$n':Marlon$, $s:Bold$, $p':\prize$, $w':1954$, $T:N_{11}$, $T_1:N_{12}$\} \\
 & $a_5m_2$=\{$n':Marlon$, $s:Bold$, $p':\prize$, $w':1972$, $T:N_{13}$, $T_1:N_{14}$\} \\\hline
$m_3$ & $a_1m_3$ = \{$n'':Ronald$, $s'':Red$, $n''':Matthew$, $s''':Orange$, $E_1:N_{15}$, $E_2:N_{16}$, $E_3:N_{17}$\} \\
 & $a_2m_3$ = \{$n'':Fredric$, $s'':Brown$, $n''':Miriam$, $s''':White$, $E_1:N_{18}$, $E_2:N_{19}$, $E_3:N_{20}$\} \\\hline
\end{tabular}
\end{center}
\end{small}
\vspace{-0.3cm}
\caption{Initial set of full assignments for the s-t tgds of Example \ref{ex:running-example}.}\label{fig-assign}
\vspace{-0.2cm}
\end{figure*}

Not surprisingly, a direct implementation of the Classical Data Exchange Chase procedure described above, building the pre-solution before proceeding to egd application, will in general fail to achieve practical performance. The main reason for this is that egd application is a time-intensive, resource-consuming task; the larger the pre-solution, the more expensive the final egd application will get. On the other hand, the Oblivious Chase allows for much more flexibility in the choice of the order of chase steps. However, it is not obvious which of the Oblivious Chase sequences may end up performing better than others. 

In the following, we will show that some Oblivious Chase sequences are particularly interesting, as they can lead to increased performance by providing means for targeted optimizations. We will introduce a new chase-based algorithm, called the \ouralgo, that produces and leverages such sequences. While the \ouralgo\ relies on Oblivious Chase steps, it will aim at improving performance by {\em exploiting the dependencies' interaction} to infer chase orders that allow systematically {\em reducing the egd application scope}. 

In this section, we will gradually expose key ideas and intuitions as well as define the main concepts and tools used by our algorithm, namely assignment lifecycle, interactions and collisions, Saturation Sets and overlapping. We will then proceed to the detailed description of the \ouralgo\ in Section 4.

\subsection{Chase and assignment lifecycle}

In order to illustrate the ideas behind the \ouralgo, we first invite the reader to adopt a slightly different point of view on the Oblivious Chase in our Data Exchange setting, namely by considering chase steps as operations that consist in {\em picking and modifying full tgd assignments}.
Indeed, in our setting, the set of body assignments 
for chasing with tgds only depends on the s-t tgds and the source instance and is thus fixed and known in advance. This also holds for full tgd assignments, which are basically extensions of body s-t tgd assignments with fresh labeled nulls.  


The chase will thus bootstrap on an {\em initial set} of full s-t tgd assignments containing all full tgd assignments for our scenario. For convenience, we add a unique identifier to each of these assignments. Each s-t tgd step will then {\em pick} one of these initial full assignments, that is, remove the assignment from the initial set and add it to a {\em target set} of assignments. 

What about egd steps? An egd step will simply consist in {\em modifying} full tgd assignments that have previously been added to the target assignment set, replacing values in the image of these assignments. 

Upon chase termination, the initial set of assignments will be empty whereas the target assignment set will comprise all the full tgd assignments originally present in the initial set (by their identifier), potentially modified because of egd application. Intuitively, a chase sequence can be thus seen as developing, for each element in a fixed known initial set of full tgd assignments, a {\em lifecycle} which starts with the assignment being picked (removed from the initial set and added to the target set) and continues with the assignment evolving in the target set, due to egd application, up to a point where it becomes immutable.
Throughout the lifecycles of assignments, and, precisely, at each chase step, the intermediate target instance can be obtained by performing the head materialization of the assignments in the the target assignment set. The solution will be given by the head materialization of the immutable forms of all the assignments, that is, after the termination of a chase sequence.

\begin{example}
Consider again the scenario in Example \ref{ex:running-example}. The initial set of full s-t tgd assignments is illustrated in Figure \ref{fig-assign}.

A possible Oblivious Chase sequence can begin with the following steps:
\begin{itemize}
\item chase step with $a_2m_1$ ({pick assignment}), adding the assignment $a_2m_1$ to the target set and removing it from the initial set;
\item chase step with $a_1m_2$ ({pick assignment}), adding the assignment $a_1m_2$ to the target set and removing it from the initial set; 
\item chase step with $e_1$ ({modify assignment}), which will transform $a_2m_1$ into $a_2m_1$=\{$n:John$, $s:Gray$, $a:33$, $Y_1:N_5$, $Y_2:N6$\} in the target set. 
\end{itemize}

\end{example}\label{ex-assign}

In the remainder, unless otherwise specified, we use assignment as a short for full s-t tgd assignment. We will say that two assignments are distinct if their identifiers are distinct. 
We will further employ the terms {\em assignment set} to refer to all the assignments for a given scenario, regardless of their lifecycle status. For our example, the assignment set is thus \{$a_1m_1$, $a_2m_1$, $a_1m_2$, $a_2m_2$, $a_3m_2$, $a_4m_2$, $a_5m_2$, $a_1m_3$, $a_2m_3$\}. We will refer to an assignment set in its initial form (when bootstrapping the chase) as the initial assignment set and to intermediate sets produced when chasing as (intermediate) target assignment sets.

\subsection{Assignment interaction}

Thanks to our new view on the chase, we further proceed to characterizing the cases when two assignments "are involved together in some egd (fd) application", that is, when the materialization of these assignments creates the application conditions for an egd. We will characterize this situation via the notion of {\em assignment interaction} as follows:

\begin{definition}[Assignment interaction]
Two (non\--ne\-cessarily distinct) assignments $a_1$ and $a_2$, in some form induced by some chase sequence, are said to interact on a target fd $f$ of the form $R.A \rightarrow R.B$ and term pairs $<v_1$, $v'_1>,\dots,<v_n, v'_n>$ (where $n$ is the cardinality of $A$) iff, denoting by $m_1$ and $m_2$ the s-t tgds for these assignments:
\begin{itemize}
\item the terms $v_i$ and $v'_i$ occur in the $A$ positions of an $R$ atom in the heads of $m_1$ and $m_2$ respectively;
\item $a_1(v_i) = a_2(v'_i)$ for all $i$.
\end{itemize} 
\end{definition}

We move now to a key observation underlying our approach to performance improvement: if at some point during a chase sequence one can somehow guarantee that remaining assignments (those not yet picked) will never interact with those already picked, egds can be applied to termination and the currently maintained target set of assignments {\em saved as is to the solution\footnote{By taking their head materializations thereof.}}, and then {\em flushed}, discarded entirely. Then, the rest of the chase sequence will proceed starting with an empty target set.
The direct consequence of this is that of {\em diminishing the size of the intermediate result}, and thus {\em reducing the egd application scope}. 

\begin{example}\label{ex-saveflush}
Consider again the chase steps in Example \ref{ex-assign}. One can verify that after the first two chase steps with $a_2m_1$ and $a_1m_2$ respectively, there exists no possible chase sequence that would make any of the remaining assignments interact with $a_2m_1$ or $a_1m_2$.  

After the chase step with $e_1$, no other egd applies to our intermediate result, comprising $a_2m_1$=\{$n:John$, $s:Gray$, $a:33$, $Y_1:N_5$, $Y_2:N6$\} and $a_1m_2$=\{$n':John$, $s':Gray$, $p':\prize$, $w':2014$, $T:N_5$, $T_1:N_6$\}. These two assignments can be saved to the solution and discarded from our current target set.
\end{example}

We can repeat this "save and flush" procedure again and again until the end of our chase sequence, thus maintaining at each chase step only part of the 
intermediate result. To achieve this however we need to be able to detect the possible interactions between assignments well in advance, since interaction is chase sequence and lifecycle dependent.  
We will see in the next subsection that this is possible by defining a stronger notion, namely the {\em collision} between assignments.

\subsection{Collisions and Saturation Sets}

A very interesting property (consequence of the definition of the Oblivious Chase) is that determining whether two assignments will interact at some point during their lifecycle is {\em chase sequence independent}. One can indeed show that the following holds\footnote{By an isomorphism argument.}:

\begin{proposition}\label{prop:collision_fixed}
Let $C_1$ and $C_2$ be two arbitrary terminating chase sequences and $a_1$ and $a_2$ be two (non-necessarily distinct) assignments. Then $a_1$ and $a_2$ interact during $C_1$ iff they interact in their final, immutable form after $C_1$ has ended iff they interact in their final, immutable form after $C_2$ has ended. 

If any of the equivalent statements above holds, we say that $a_1$ and $a_2$ {\em collide}.
\end{proposition}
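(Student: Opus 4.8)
The plan is to split the claimed triple equivalence into an \emph{intra-sequence} half --- ``$a_1$ and $a_2$ interact at some point of $C_1$'' $\iff$ ``$a_1$ and $a_2$ interact in their final, immutable forms after $C_1$'' --- and a \emph{cross-sequence} half --- ``\dots after $C_1$'' $\iff$ ``\dots after $C_2$''. Together with the symmetric use of the intra-sequence half applied to $C_2$, these give the statement. Throughout I assume no sequence fails, since otherwise (by Proposition~\ref{prop:iso}) every sequence fails and there is no solution, so ``final immutable form'' is moot. The intra-sequence half is essentially bookkeeping; the cross-sequence half carries the real content and is where the isomorphism argument enters.

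For the intra-sequence half, fix the target fd $f$ of the form $R.A\to R.B$ and the term pairs $\langle v_1,v_1'\rangle,\dots,\langle v_n,v_n'\rangle$ witnessing the interaction. That the $v_i$ and $v_i'$ sit in the $A$-positions of $R$-atoms in the heads of the respective s-t tgds is purely syntactic, independent of any chase sequence, so along a run the only thing that varies is whether the equalities $a_1(v_i)=a_2(v_i')$ hold. An Oblivious Chase egd step only ever performs a global null-to-term replacement (cases (ii) and (iii) of its definition), that is, a substitution that \emph{merges} terms and is never undone, since replaced nulls disappear and freshly created nulls are always new. Hence, since the final form of any assignment is obtained from its form at an earlier step $k$ by applying the composition $h$ of all replacements performed after step $k$, and $h$ preserves equalities, if $a_1(v_i)=a_2(v_i')$ holds at step $k$ of $C_1$ it still holds after $C_1$ ends; thus interaction at step $k$ implies interaction in the final forms, and the converse is immediate since the final forms are those at the last step. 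One only needs to note that an assignment not yet picked at step $k$ still carries its canonical fresh nulls, which the egd steps before its picking leave untouched, so $h$ acts correctly on all of its values; nothing changes if $a_1=a_2$.

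For the cross-sequence half I would prove the sharper fact that the families of final assignment forms produced by $C_1$ and by $C_2$ are related by a single bijection $\rho$ of $\Delta_c\cup\Delta_n$ that fixes every constant and is uniform over identifiers: for each assignment identifier $\mathit{id}$, the final form after $C_2$ equals $\rho$ composed with the final form after $C_1$. Granting this, $a_1(v_i)=a_2(v_i')$ after $C_1$ iff $\rho(a_1(v_i))=\rho(a_2(v_i'))$ after $C_1$ (as $\rho$ is injective) iff the same equalities hold after $C_2$, which is exactly the cross-sequence equivalence. To construct $\rho$: because the s-t tgds have source-only bodies and target egds never modify source facts, the set of s-t tgd steps of \emph{any} terminating Oblivious Chase sequence is in canonical bijection with the initial assignment set (obliviousness forbidding repeats), and after fixing canonical names for the fresh nulls the only genuine nondeterminism left in a sequence lies in its egd steps. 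Moreover, the identifications performed by those egd steps generate, on the nulls of this canonical pre-solution, the \emph{least} equivalence relation whose quotient satisfies $\Sigmat$ --- a class collapsing onto a constant whenever some fd forces it, and the sequence failing exactly when the closure would equate two distinct constants; this closure depends only on the pre-solution and on $\Sigmat$, so $C_1$ and $C_2$ induce the \emph{same} partition of the null-slots. They may, however, retain different representatives for a class --- the freedom in case (iii) of the egd step --- and setting $\rho$ to send each class's $C_1$-representative to its $C_2$-representative (extended arbitrarily to a bijection of the remaining nulls) gives the required intertwining map, since in any final form a head variable of $\mathit{id}$ maps to the representative of its slot's class and a body variable maps to a constant, which $\rho$ fixes.

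The step I expect to be the main obstacle is precisely this provenance-aware strengthening of Proposition~\ref{prop:iso}: going from ``the two solutions are isomorphic'' to ``the two families of final assignment forms are intertwined by one bijection, simultaneously for all identifiers''. The delicate points are (i) verifying that the egd steps of an arbitrary terminating sequence realize precisely the closure above --- neither more nor less --- so that the induced partition of null-slots is the canonical one; (ii) handling cleanly the representative-choice nondeterminism of egd case (iii); and (iii) accounting for the body variables of s-t tgds, which are harmless (they carry source constants) but should be mentioned explicitly rather than glossed over. Everything else --- persistence of merges, the syntactic nature of the positional conditions, and the two reductions themselves --- is routine once the intertwining bijection is available.
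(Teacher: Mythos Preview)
Your proposal is correct and matches the paper's approach: the paper does not spell out a proof of this proposition but only remarks in a footnote that it holds ``by an isomorphism argument,'' and your cross-sequence half is exactly a careful unpacking of that hint---with the provenance-aware strengthening of Proposition~\ref{prop:iso} rightly identified as the crux---while your intra-sequence half supplies the easy monotonicity-of-merges direction the paper leaves implicit. There is no further detail in the paper to compare against.
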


The result above provides us with the fixed, lifecycle and chase-sequence independent {\em collision} relation on assignments. Using the notion of collision we can then define {\em groups of assignments that do not collide with the rest of assignments}, that is, Saturation Sets:

\begin{definition}[Saturation Sets]
Given the assignment set $A$ for a Data Exchange scenario, a Saturation Set $S$ is a subset of $A$ such that for all assignments $a \in S$ and $a' \in A-S$, $a$ and $a'$ do not collide.
\end{definition} 

Notice that Saturation Sets can be carved out from the initial set of assignments and can evolve into (intermediate) target results later on during the lifecycle stages. Since collision is lifecycle and chase-sequence independent however, our definition holds without referring to a specific form / lifecycle stage of assignments. In our previous Example \ref{ex-saveflush}, it turns out that we were able to apply the "save and flush" strategy at a specific moment during the chase mainly because we have constructed, as an (intermediate) target result, a Saturation Set. Our goal will be thus intuitively that of structuring the chase sequence so as to group together assignments belonging to the same Saturation Set. This will allow us in turn to repeatedly apply the "save and flush" strategy and reduce the size of the intermediate results.\\ 

\noindent \textbf{Saturation Set size}. Note that the above definition of Saturation Sets is very permissive and does not tell us how to build such sets. Indeed, it is enough to note that if $S_1$ and $S_2$ are two Saturation Sets, then their union is also a valid Saturation Set by our definition; as it turns out, the whole assignment set is also a valid Saturation Set. As a consequence, finding Saturation Sets and structuring the chase accordingly by interleaving "save and flush" operations is not directly a guarantee of optimization. Importantly, what we aim for are {\em comparably smaller Saturation sets}. 

In principle, in order to find ideally small candidates for such Saturation Sets, one would need to compute the graph of collision, which would in turn imply running the chase sequence until completion. In the following, we present the alternative concept of {\em overlap}, which is coarser but easier-to-compute than the notion of collision. We will show how the overlap lets us explore the search space of possible Saturation Sets in the next subsection. 

\subsection{Overlaps and overlapping assignments}

To be able to define overlap between assignments, we start by introducing an essential notion: the mutable existential variables (or mutable existentials).

\begin{definition}[Mutable existential variable]
Given\\an s-t tgd $m$ we inductively define the mutable existential variables in the head of m as follows: 

For every fd $f$ of the form $R.A \rightarrow R.B$ and every atom $r$ of relation $R$ in the head of $m$, denoting by $(v_1, \dots, v_n)$ the terms appearing in the $A$ positions of $r$, if the following hold:
\begin{itemize}
\item all $v_i$ are either universal variables, constants or mutable existential variables
\item or there exists $r'\neq r$ an atom of relation $R$ in the head of $m$ s.t., denoting by $(v'_1, \dots, v'_n)$ the terms occurring in the $A$ positions in $r'$, for all $i$, either
\begin{itemize} 
\item $v_i$ and $v'_i$ are universals, constants, or mutable existentials
\item or $v_i=v'_i$
\end{itemize}

then any existential variable appearing in any of the $B$ positions in $r$ is mutable.
\end{itemize}
\end{definition}

Intuitively, the notion of mutable existentials allows for a rough account of possible value changes caused by egd application: a null introduced by a mutable existential can be seen as {\em prone to change} due to one or several egd steps. Note also that the definition above is well-formed:  the operational process it describes for iteratively identifying mutable existentials is indeed finite because of the finite number of existential variables in any tgd's head.

\begin{example}
Consider the tgd (existentials in capital)\\ $A(x,y)\rightarrow$ $R(x,C)$, $S(C,G)$, $T(y,D)$, $U(L,D)$, $W(V,M)$, $W(V,N)$ and further assume that all first attributes are keys for the relations $R$, $S$, $T$, $U$, $W$.
Then $C$ is mutable because of the presence of $x$ (universal) on the first position in $R$ and the key on $R$. The same holds for $D$ because of $y$ in the T atom. Also, $G$ is mutable  because of the presence of $C$ (mutable) on the first position in S. With the second part of the definition, $M$ and $N$ are mutable because of $V$ occurring on the first position in the two $W$ atoms. On the other hand, $L$ is not mutable since it does not fall under any of the definition's conditions.
\end{example}  

Using the above notion of mutable existentials, we can now define overlap and overlapping assignments as follows:

\begin{definition}[Overlapping assignments]
Two {\em distinct} assignments $a_1$ and $a_2$ are said overlapping on a target fd $R.A \rightarrow R.B$ and term pairs $<v_1$, $v'_1>,\dots,<v_n, v'_n>$ (where $n$ is the cardinality of $A$) iff, for all $i$, and denoting by $m_1$ and $m_2$ the s-t tgds for $a_1$ and $a_2$ respectively:
\begin{itemize}
\item $v_i$ and $v'_i$ occur in the $i$th $A$ positions of an $R$ atom in the heads of $m_1$ and $m_2$ respectively and
\item both $v_i$ and $v'_i$ are either universal variables, constants or mutable existentials and
\item if both $a_1(v_i)$ and $a_2(v'_i)$ are constants then $a_1(v_i)$=$a_2(v'_i)$
\end{itemize} 
\end{definition}

\begin{example}
It is easy to see that the assignments $a_2m_1$ and $a_1m_2$ of Example 3.3 are overlapping, on term pairs $<n,n'>$ and $<s,s'>$ and fd $e_1$.    
\end{example} 

As is the case for interaction, the overlap notion is lifecycle-dependent. As opposed to interaction however, lifecycle evolution intuitively refines overlapping {\em in the opposite direction}: that is, two assignments can overlap in their initial form but while evolving during the chase they {\em may end up non-overlapping}. Thus, it is interesting to delay overlap evaluation, or alternatively, to proceed to early egd application. We will provide more details on this in Section 4.  


As sketched above, our goal will be to employ the notion of overlap instead of the notion of collision. Indeed, our algorithm will exploit the notion of overlap in order to build Saturation Sets. The following result is crucial for our solution computation procedure by showing that we are indeed correct in doing so, because collision always (i.e. no matter the lifecycle stage) implies overlap:

\begin{proposition}\label{prop:overlap_collision}
Let $a_1$ and $a_2$ be two {\em distinct} assignments such that $a_1$ and $a_2$ collide.

Then at any point of their lifecycle $a_1$ and $a_2$ are overlapping.
\end{proposition}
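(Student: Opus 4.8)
The plan is to split the three defining clauses of overlapping and dispatch them one by one, using Proposition~\ref{prop:collision_fixed} to reduce collision to interaction in the final, immutable forms. Fix an arbitrary terminating Oblivious Chase sequence and an arbitrary point of it. By Proposition~\ref{prop:collision_fixed}, since $a_1$ and $a_2$ collide they interact in their final forms in this sequence: there are a target fd $f\colon R.A\rightarrow R.B$ and term pairs $\langle v_1,v'_1\rangle,\dots,\langle v_n,v'_n\rangle$ such that $v_i$ (resp.\ $v'_i$) occurs in the $i$th $A$-position of an $R$-atom in the head of $m_1$ (resp.\ $m_2$), and $a_1^{\mathrm{fin}}(v_i)=a_2^{\mathrm{fin}}(v'_i)$ for all $i$, where $a_1^{\mathrm{fin}},a_2^{\mathrm{fin}}$ denote the final forms. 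I would take exactly this $f$ and these term pairs as the overlap witness. The first clause of the overlapping definition (the syntactic position condition) then holds immediately, since tgd heads do not change along the chase. For the third clause I would use that in the Oblivious Chase constants are never rewritten (an egd step only replaces a null): hence if at the chosen point $a_1(v_i)$ is a constant $c$ then $a_1^{\mathrm{fin}}(v_i)=c$, and symmetrically for $a_2(v'_i)$, so $a_1^{\mathrm{fin}}(v_i)=a_2^{\mathrm{fin}}(v'_i)$ forces the two constants to coincide.

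The substance is the second clause: each $v_i$ (and, symmetrically, each $v'_i$) must be a universal variable, a constant, or a \emph{mutable} existential. Note this is a purely static property of the tgd heads, so it is automatically independent of the chosen lifecycle point. It suffices to treat the case where $v_i$ is an existential of $m_1$ and show it is mutable (the case of $v'_i$ is identical with the roles of $a_1,a_2$ swapped). The heart of the argument is the following invariant, proved by induction on the number of chase steps: \emph{if $v$ is a non-mutable existential head variable of a tgd $m$ and $a$ is an assignment with s-t tgd $m$, then in every intermediate form (and the final form) of the sequence $a(v)$ still equals the fresh null $N_{a,v}$ originally assigned to $v$ when $a$ was picked, this null occurs in the head materialization of no assignment other than $a$, and within $a$'s head materialization it occurs exactly at the positions where $v$ occurs syntactically.} Granting the invariant: if $v_i$ were a non-mutable existential, then $a_1(v_i)=N_{a_1,v_i}$ throughout, so $a_1^{\mathrm{fin}}(v_i)=N_{a_1,v_i}$; by the interaction above this equals $a_2^{\mathrm{fin}}(v'_i)$, hence $N_{a_1,v_i}$ occurs in $a_2$'s head materialization, contradicting the invariant since $a_2\neq a_1$. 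Thus $v_i$ is mutable.

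It remains to prove the invariant. The base case (empty target set) and the case of a tgd step (which only introduces all-fresh nulls, distinct from every earlier null and from all source values) are immediate. Consider an egd step with an fd $g\colon R_g.A_g\rightarrow R_g.B_g$, equating column $i_0\in B_g$ of two $R_g$-atoms of the current instance that agree on all $A_g$-columns, and eliminating one of the two values at column $i_0$ (necessarily a null) in favour of the other. I would argue by contradiction that $N_{a,v}$ is touched neither as the eliminated null nor as the surviving value. In either case, the induction hypothesis forces the $R_g$-atom carrying $N_{a,v}$ at column $i_0$ to be an instantiation of a head atom $r''$ of $m$ in which $v$ occurs at position $i_0$ (so $v$ sits in a $B_g$-position of $r''$); call the other matched atom $\gamma'$, which lies in the head materialization of some assignment $b$ (it is a target fact, hence in some assignment's materialization). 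For every column $j\in A_g$ with term $c_j$ in $r''$, the $A_g$-agreement equates $a(c_j)$ with the value at column $j$ of $\gamma'$. If $b\neq a$ and $c_j$ were a non-mutable existential of $m$, the induction hypothesis applied to $(a,c_j)$ says $a(c_j)=N_{a,c_j}$ occurs only in $a$'s materialization, contradicting its occurrence in $\gamma'\subseteq$ $b$'s materialization; if $b=a$ (so $\gamma'$ instantiates a distinct head atom $r'''$ of $m$ with term $c'_j$ at column $j$, and $c_j\neq c'_j$), the locality clause of the induction hypothesis is violated. Hence for every $j\in A_g$ either both $c_j,c'_j$ are universals/constants/mutables or $c_j=c'_j$ — which is precisely the condition in the definition of mutable existential making every existential in the $B_g$-positions of $r''$, and in particular $v$, mutable; contradiction. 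Therefore $N_{a,v}$ is untouched, and the invariant persists.

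I expect the main obstacle to be making this last inductive step watertight: one must argue, through an arbitrary interleaving of egd applications, that a null stemming from a non-mutable existential is never merged and never leaks into another assignment's facts, and the only leverage is the delicate interplay between the Oblivious-Chase egd-firing condition and the (static but intricate) inductive definition of mutability, together with the precise bookkeeping in the invariant about \emph{where} such a null may occur. The rest of the proof is routine.
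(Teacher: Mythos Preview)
The paper states Proposition~\ref{prop:overlap_collision} without proof, so there is no reference argument to compare against; your proposal supplies a proof where the paper omits one. The approach you take --- reducing collision to interaction in the final forms via Proposition~\ref{prop:collision_fixed}, then proving by induction on the chase that a null arising from a non-mutable existential is \emph{inert} (never rewritten, never leaked into another assignment's materialization, never displaced from the syntactic positions of its variable) --- is the natural one, and I see no gap in it.

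Two points are worth making explicit when you write it up. First, your invariant must be stated simultaneously for \emph{all} non-mutable existentials of all tgds and all assignments, since in the egd inductive step you appeal to the hypothesis not only for $(a,v)$ but also for each $(a,c_j)$ with $c_j$ a non-mutable existential sitting in an $A_g$-position of $r''$; you do this implicitly, but the mutual induction deserves to be flagged. Second, in the $b=a$ case you should justify that the two matched $R_g$-facts stem from \emph{distinct} head atoms $r''\neq r'''$ of $m$: an egd step only fires when the two facts disagree on some $B_g$-column, hence they are distinct facts, and two distinct facts in a single assignment's head materialization necessarily come from distinct head atoms. With that in hand, the second alternative in the definition of mutable existential (the clause involving a second atom $r'\neq r$ of the same relation in the same head) is precisely what disposes of the $b=a$ case, exactly as you identify.
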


\section{A chase-based algorithm}

In the following, we illustrate our chase-based algorithm, the \ouralgo. We first present the main algorithmic bricks, then focus on refinements and optimizations.

\subsection{Main algorithmic bricks}

Algorithm \ref{mainloopv0} sketches the global workflow of the \ouralgo. Our algorithm will start by computing the initial assignment set $A_m$ for each s-t tgd $m$ (recall that the initial assignment set for a DE scenario consists in the union of those). It will then repeatedly: (i) construct a Saturation Set and chase this set to termination and (ii) add the result (i.e. the obtained final Saturation Set) to the target solution
by applying head materialization to the assignments of the Saturation Set. 
This procedure (lines \ref{chooseseed} - \ref{addsat}) is repeated until the entire assignment set has been consumed, i.e. partitioned in Saturation Sets (line \ref{bigloop}).

\begin{algorithm}[h!]
\nonl \textbf{Algorithm} Interleaved Chase\\
\nonl {\em\ \ \ Can raise Exception:ChaseFail}\\
\KwIn{$M = (S,T,\Sigma_{st},\Sigma_{t})$,\textbf{SrcInstance} $I$} 
\KwOut{\textbf{TgtInstance} $J$: the target solution}

$J=\emptyset$\;
\ForEach {$m \in \Sigmast$}
{\label{asgbyrule}
	$A_m$ = InitialAssignmentSet$(m,I)$\;
}
\While{$\exists m \in \Sigmast$ s.t. $A_m \neq \emptyset$}
{\label{bigloop}
	$seed$ = SelectArbitraryAsg($A_m$)\;\label{chooseseed}
	$\Ssat$ = BuildAndChaseSaturationSet($seed$)\;	 \label{buildsat}
    $J = J \cup$ Materialize($\Ssat$)\;\label{addsat}
}\label{endsatdesc}       
\Return $J$\;
\caption{Main loop of the \ouralgo \label{mainloopv0}}
\end{algorithm}

Each new Saturation Set construction relies on first selecting an arbitrary assignment that has not yet been attributed to some previous Saturation Set (line \ref{chooseseed}). The bulk of our solution computation procedure then relies on the subroutine \-$BuildAndChaseSaturationSet$. Algorithm \ref{buildsatv0} shows an initial, unrefined form of this subroutine.\\

\begin{algorithm}[h!]
\nonl \textbf{Subroutine} BuildAndChaseSaturationSet\\
\nonl {\em\ \ \ Can raise Exception:ChaseFail}\\
\KwIn{Seed Assignment $seed$} 
\nonl \textbf{Globals: $A_m \forall m \in \Sigmast$}\\
\KwOut{Saturation Set $\Ssat$}

$NewAssignments$ = $\{seed\}$\;
$\Ssat = \{seed\}$\;
$A_{Tgd(seed)} -=\{seed\}$\;
ApplyEgdsToTermination($\Ssat, seed$)\;
\While{$NewAssignments \neq \emptyset$}
{ 
	$a$ = SelectArbitraryAsg($NewAssignments$)\;
    $NewAssignments - = \{a\}$\;
    \ForEach {$a'\in \bigcup_{m\in \Sigmast}^{\ }{A_m}$ s.t. $a$ and $a'$ overlap}
    {\label{expandsat}
        $NewAssignments \cup= \{a'\}$\; 
        $\Ssat \cup=\{a'\}$\;
        $A_{Tgd(a')}-=\{a'\}$\;
        ApplyEgdsToTermination($\Ssat, a'$)\;\label{applyegds}
       
    } 
}
\Return $Ssat$\;
\caption{BuildAndChaseSaturationSet\label{buildsatv0}, initial}
\end{algorithm}

\noindent \textbf{Saturation Set construction}. The $BuildAndChaseSat\-urationSet$ subroutine initializes the current Saturation Set with the provided seed assignment. It then iteratively expands this Saturation Set by embedding new assignments that have at least an overlap with previously added assignments. This is done until a fixpoint is reached, i.e. no assignment can be further added.\\

\noindent \textbf{Chase steps}. The $BuildAndChaseSaturationSet$ subroutine also fires the chase steps on the Saturation Set.
Tgd chase steps are in reality subsumed by Saturation Set construction. They indeed merely consist in picking assignments, removing them from the initial assignment set\footnote{We assume a zero-cost call Tgd($a$) that allows us to retrieve the s-t tgd corresponding to an assignment; this can be seen as accessing an attribute of the assignment.} and adding them to the current Saturation Set. After each such addition of an assignment to the current Saturation Set, a sequence of egd chase steps is further applied to termination. This is done via the $ApplyEgdsToTermination$ subroutine, namely by firing all the egds triggered (directly or indirectly) by the assignment last added to the current Saturation Set. Note that this method can raise a {\em ChaseFail} exception corresponding to cases when the chase fails upon egd application attempt. This exception will be propagated through our algorithm to induce a global {\em ChaseFail} exit. If no failure occurs during the egd sequences, the $BuildAndChase\-SaturationSet$ subroutine will return the final Saturation Set, chased to termination, and ready to be converted into part of the target solution.


Note that, in the above egd chase steps, we have opted for an {\em early egd application} strategy that leads to firing egds as soon as applicable. As briefly mentioned in Section 3, this allows us to reduce the overall size of the Saturation Set, by leveraging a refined overlap among assignments. We show hereafter an example of how such size reduction occurs:

\begin{example}
Consider a Data Exchange scenario comprising unary source relations $A$ and $B$ and binary target relations $R$ and $S$, along with a source instance consisting in the tuples $A(1)$, $B(2)$ and a set of s-t tgds as follows: $m_1: A(x) \rightarrow R(x,y),S(y,z)$, $m_2: A(x) \rightarrow R(x,x)$ and $m_3: B(x) \rightarrow S(x,z)$. Furthermore, two functional dependencies (keys on first attributes) $f_1 : R.\{1\}\rightarrow R.\{2\}$ and $f_2 : S.\{1\}\rightarrow S.\{2\}$ are defined on the target. 

The initial assignment set comprises: $a_1m_1$ = \{$x:1$, $y:N_1$, $z:N_2$\}, $a_1m_2$ = \{$x:1$\} and $a_1m_3$ = \{$x:2$, $z:N_3$\}.

Assume that we start constructing a Saturation Set with $a_1m_1$ as seed. We then retrieve $a_1m_2$ because it is overlapping with $a_1m_1$ on the pair of terms $<x,x>$ and $f_1$. We have now two options. The first is to not apply any egd, and continue by adding $a_1m_3$ to our current Saturation Set, since it overlaps with $a_1m_1$ on the pair of terms $<y,x>$ and $f_2$ (note that $y$ is a mutable existential). We end up with a Saturation Set containing 3 assignments.

The second option is to apply egds prior to evaluating overlapping with $a_1m_3$. This will in turn change the assignment $a_1m_1$ to \{$x:1$, $y:1$, $z:N_2$\}. Because of this change, $a_1m_1$ and $a_1m_3$ are no longer overlapping; indeed, the previous overlap on $<y,x>$ no longer holds. Due to early egd application, the Saturation Set ends up containing only two assignments, $a_1m_1$ and $a_1m_2$, instead of the previous 3. 
\end{example} 

Moreover, through consistent early application of the egds, we can seamlessly guarantee that only egd application conditions created by the assignment last introduced 
need to be checked, instead of re-examining our whole intermediate result.
We do not further detail the implementation of $ApplyEgdsTo\-Termination$, mainly because any implementation 
can be plugged in as long as it proceeds to the necessary changes on
assignments, and further raises a {\em ChaseFail} exception in case of
failed step attempts. 
Recall also that the main purpose of our algorithm is that of reducing the
intermediate result on which egd steps will be applied, and this is
orthogonal and complementary to egd-specific optimizations, that can be 
envisioned as future work.\\


\noindent \textbf{New assignment discovery}. A core task of the $BuildAnd\-ChaseSaturationSet$ subroutine is the iterative discovery of overlapping assignments, thus progressively expanding the current Saturation Set. A baseline take on assignment discovery would consists in repeatedly considering all assignments in the current Saturation Set and for each of those searching for overlapping assignments still available. 

We include in our initial algorithm presentation a straightforward optimization from this baseline. Namely, we only consider a {\em frontier} of our Saturation Set, consisting in the newly added assignments that have not yet lead to expansion via overlapping. This frontier is represented by the $NewAssignments$ variable, which holds the set of assignments "yet to be verified". We dedicate the following subsection to presenting more complex, additional designs that further improve the performance in assignment discovery.

\ignore{
  \SetKwFunction{mainloop}{}
  \SetKwProg{myalg}{InterleavedChase}{}{}
  \myalg{\mainloop}{

   xxx\;
   \SetKwFunction{proc}{BuildSaturationSet}
   \proc{}\;
   \KwRet\;}
  {}
   \caption{bla}
\end{algorithm}
}

\ignore{
\begin{algorithm}
	\KwIn{$M = (S,T,\Sigma_{st},\Sigma_{t})$,\textbf{SrcInstance} $I$} 
	\KwOut{\textbf{TgtInstance} $J$: the target solution}
    $J=\emptyset$\;
    $A= AssignmentSet(M, I)$\;
	\While{$A \neq \emptyset$}
	{
	      //\textit{build and process a new Saturation Set}\\
	      $Ssat = \emptyset$\;
          seed = PickArbitraryAsg($A$)\;
          $NewAssignments$ = $\{seed\}$\;
          DoChaseSteps($seed$,$\Ssat$)\;
          \While{$NewAssignments \neq \emptyset$}
          { $a$ = PickArbitraryAsg($NewAssignments$)\;
            $NewAssignments - = a$\;
            \ForEach {$a'\in A$ s.t. $a$ and $a'$ overlap}
            {
                $NewAssignments \cup= \{a'\}$\; 
          		DoChaseSteps($a'$,$\Ssat$)\;
          	} 
          }

    //\textit{add result to the target instance}\\		 
    $J = J \cup \Ssat$\;\
    }		          
    \Return $J$\;
    \caption{The DataExchange Algorithm \label{algv0}}
\end{algorithm}
}

\ignore
{
\noindent \textbf{Saturation Sets and materialized assignments}. Recall that our view on the chase is one playing on assignments: tgd steps will consume assignments whereas egd steps will make these assignments evolve, the overall process inducing what we call the assignments lifecycle (Section ...). Once the chase has terminated, a final, immutable form of the assignments can be materialized to obtain the target instance.

For practical purposes we choose to explicitly keep the materialized form of the assignments and thus to consider the saturation set as comprising both assignments and atoms, which we refer to as $\Ssat.assignments$ and $\Ssat.atoms$ respectively. We moreover consider that for any atom $a$ in $\Ssat.atoms$ a (zero-cost) call $assignment(a)$ provides the assignment that, by its materialization, yields $a$.  

This dual form of the Saturation Set will turn out to be useful for early duplicate removal and will provide a series of advantages pointed out throughout this section. We stress out however the fact that the explicit maintaining of atoms can be altogether skipped and duplicate removal delayed to the the target instance.\\

\\

\noindent \textbf{Chase steps}. The chase steps are implemented in the $AddAssignmentAndChase$ routine, taking as argument an assignment $a$ and the current Saturation Set $\Ssat$. The $AddAssignmentAndChase$ routine will :
\begin{itemize} 
\item fire a tgd step with the assignment $a$ 
\item consequently, iteratively fire all applicable egds until no more egds apply.
\end{itemize}

The pseudocode for the $AddAssignmentAndChase$ is presented in figure ...

The routine starts by \\

\noindent \textbf{Building and processing Saturation Sets}. The core of our algorithm resides in the construction and chasing of Saturation Sets, taking place at lines \ref{startbuildsat}-\ref{endbuildsat}. The procedure starts with a {\em seed assignment} which is an assignment not attributed to a previously constructed Saturation Set (line \ref{chooseseed}. It continues by alternating between (i) reaching out to new assignments that overlap with existing ones (line \ref{search}) and (ii) adding these new assignments in the Saturation Set and chasing the Saturation Set to termination using the $AddAssignmentAndChase$ routine.

An important question that may arise at this point is: why interleave assignment discovery and chase? Indeed, while adding an assignment to a Saturation Set can be seen as a {\em chase step with a tgd} (we remind that this consists in picking and consuming an assignment), egd application could in principle be postponed to the end of the Saturation Set construction. 
The essential observation here is that {\em early egd application can help diminish the size of the produced Saturation Sets}. This has been hinted at the end of Subsection..., and will be further detailed in Subsection... 
\\}

\subsection{Optimizing assignment discovery}

The loop on line \ref{expandsat} in the $BuildAndChaseSaturationSet$ subroutine inspects all not-yet-consumed assignments for overlaps with the currently considered (frontier) assignment $a$. The overlap check for $a$ and $a'$ furthermore implies, in principle, examining all fds and all term pairs for every pair of assignments, thus implying an high overhead. We will show how to importantly lower this overhead, by leveraging {\em relations between the respective tgds}. 

We start by defining the notions of tgd {\em conflict areas} and {\em conflicts}, as follows:

\begin{definition}[Conflict area]
Let $m$ be an s-t tgd and $f$ an fd of the form $R.A \rightarrow R.B$.
Let $ct$ = $(v_1, \dots, v_n)$ be a tuple of $n$ terms occurring in the head of $m$, where $n$ is the cardinality of the set $A$.
Then the pair <$ct,f$> is a conflict area for $m$ iff: 
\begin{itemize}
\item there is an $R$ atom in the head of $m$ comprising each $v_i$ at the $i$th position of $A$ and
\item all $v_i$ are either universal variables, or constants, or mutable existentials.
\end{itemize} 
\end{definition}

\begin{definition}[Conflicts and conflicting tgds]
Given\\ two (non necessarily distinct) s-t tgds $m_1$ and $m_2$ and conflict areas $ca_1 = <ct_1, f_1>$ and $ca_2 = <ct_2, f_2>$ for $m_1$ and $m_2$ respectively, the pair $<ca_1, ca_2>$ is a conflict between $m_1$ and $m_2$ iff $f_1=f_2$.

If $m_1$ = $m_2$ and $ca_1 = ca_2$ we further say that the conflict is {\em trivial}.
\end{definition}

We characterize all conflicts among the s-t tgds in a DE scenario by means of the {\em Conflict Graph}:

\begin{definition}[Conflict Graph]
Given a Data Exchange scenario, its associated {\em Conflict Graph}, denoted by \textit{$G^c_{\Sigma_{st}}$}, is an undirected (possibly) cyclic graph
composed by an ordered pair $(V,E)$, where $V$ is a set of vertices, each
of which represents an s-t tgd in $\Sigma_{st}$, and $E$ is a set of edges, each edge witnessing the presence of at least one non-trivial conflict between the two
s-t tgds corresponding to its incident vertices. Furthermore, every vertex in $V$ is adorned with all the conflict areas for the corresponding tgd. 
\end{definition}  

\begin{example}
The Conflict Graph for the DE scenario of Example \ref{ex:running-example} is depicted below:
\end{example}

\begin{figure}[h!]
	\centering

\begin{tikzpicture}[=stealth',auto,node distance=3cm,
thick,main node/.style={circle,fill=blue!20,draw,font=\sffamily\Large\bfseries}, every loop/.style={}]

\node[circle,draw] (1) {$v_1$};
\node[circle,draw] (2) [left of=1] {$v_2$};
\node[circle,draw] (3) [right of=1] {$v_3$};
\path[]
(1) edge (2)
(1) edge (3)
(2) edge [bend right] (3)
(3) edge [loop above] (3);
\end{tikzpicture}
	$\begin{array}{c}
	v_1, v_2, v_3 \text{ are vertices corresponding to the tgds $m_1$, $m_2$, $m_3$}
	\vspace{1em}\\
	\mathbf{Areas(v_1)}=\{ca_1^1=\langle(n,s),e_1\rangle\}.\\
    \mathbf{Areas(v_2)}=\{ca_2^1=\langle(n',s'),e_1\rangle,ca_2^2=\langle(p',w'),e_2\rangle\}.\\
    \mathbf{Areas(v_3)}=\{ca_3^1=\langle(n'',s''),e_1\rangle,ca_3^2=\langle(n''',s'''),e_1\rangle\}.
    \vspace{1em}\\

	\end{array}$
\end{figure}

As we will show, the Conflict Graph will allow increasing the efficiency of the new assignment discovery. To characterize the link between conflicts and overlaps, we will introduce the notions of {\em conflict masks} and their {\em matching}, as follows:

\begin{definition}[Conflict mask]
Let $m$ be an s-t tgd, $a$ an assignment for $m$ and $ca = <(v_1, \dots, v_n), f>$ a conflict area for $m$.
The conflict mask of $a$ on $ca$, denoted by $mask_a^{ca}$, is a tuple $(w_1,\dots,w_n)$ such that for all $i=1,\dots,n$, $w_i \in \Deltac \cup \{*\}$ and
\begin{itemize}
\item if $a(v_i)$ is a constant then $w_i = a(v_i)$
\item else, $w_i$ = $*$.
\end{itemize}
We will hereafter also refer to conflict masks without detailing their source, that is, considering masks as $n$-tuples comprising as elements either constants or the special $*$ value.
\end{definition}

\begin{definition}[Matching of a conflict mask]
Let $m$ be an s-t tgd, $ca$ = $<(v_1,\dots, v_n), f>$ a conflict area for $m$ and $a$ an assignment for $m$. Let $msk$ be a conflict mask of size $n$. We say that $a$ {\em matches} $msk$ on $ca$ iff, for all $i=1,\dots,n$ where $msk_i$ is a constant, either $a_2(v_i)$ = $msk_i$ or
$a_2(v_i) \in \Deltan$.
\end{definition}

Based on the above definitions, we are now ready to show the essential link between overlaps, conflicts and conflict masks:

\begin{proposition}\label{prop:conflict-overlap}
Let $m_1$ and $m_2$ be two (non-necessarily distinct) s-t tgds and $a_1$ and $a_2$ two distinct assignments of $m_1$ and $m_2$ respectively. Let $f = R.A \rightarrow R.B$ be a functional dependency with $n$ the cardinality of $A$.

Let $V$ = $(v_1,\dots, v_n)$ and $V'$ = $(v'_1,\dots,v'_n)$ be n-tuples of terms occurring in the head of $m_1$ and $m_2$ respectively. Then the following are equivalent:
\begin{itemize}
\item $a_1$ and $a_2$ overlap on $<v_1, v'_1>, \dots <v_n, v'_n>$ and $f$
\item $<ca_1, ca_2>$ with $ca_1$= $<(v_1,\dots, v_n),f>$ and $ca_2$ = $<(v'_1,\dots, v'_n),f>$ is a conflict between $m_1$ and $m_2$ and $a_2$ matches $mask_{a_1}^{ca_1}$ on $ca_2$.
\item $<ca_1, ca_2>$ with $ca_1$= $<(v_1,\dots, v_n),f>$ and $ca_2$ = $<(v'_1,\dots, v'_n),f>$ is a conflict between $m_1$ and $m_2$ and $a_1$ matches $mask_{a_2}^{ca_2}$ on $ca_1$.
\end{itemize}
\end{proposition}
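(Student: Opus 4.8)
The plan is to prove Proposition~\ref{prop:conflict-overlap} by unfolding both definitions (overlapping assignments and conflict/conflict mask/matching) side by side and checking that the listed conditions are literally rearrangements of one another. Since the second and third bullets are symmetric (swap the roles of $a_1$ and $a_2$, and of $ca_1$ and $ca_2$), it suffices to prove the equivalence of the first and second bullets; the equivalence of the first and third then follows by the same argument with the roles interchanged, using the symmetry of the definition of overlapping in $\langle v_i, v'_i\rangle$.

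First I would fix the setup: assume $V = (v_1,\dots,v_n)$ and $V' = (v'_1,\dots,v'_n)$ occur in the $A$-positions of an $R$-atom in the head of $m_1$ and $m_2$ respectively (this is the common hypothesis shared by the ``overlap'' side and by the conflict-area side — note that if it fails, neither bullet can hold, so we may assume it). Under this hypothesis, $ca_1 = \langle(v_1,\dots,v_n),f\rangle$ and $ca_2 = \langle(v'_1,\dots,v'_n),f\rangle$ are candidate conflict areas, and $\langle ca_1, ca_2\rangle$ is a conflict between $m_1$ and $m_2$ precisely when both are genuine conflict areas (i.e.\ each $v_i$ and each $v'_i$ is a universal, a constant, or a mutable existential) — and $f_1 = f_2 = f$ is automatic. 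So the content reduces to matching up three pieces: (a) the ``positions / atom'' condition, (b) the ``each term is universal/constant/mutable existential'' condition, and (c) the constant-agreement condition $a_1(v_i) = a_2(v'_i)$ whenever both are constants.

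The key step is then a term-by-term case analysis. Condition (a) and condition (b) of the overlap definition say exactly that $ca_1$ and $ca_2$ are conflict areas, hence that $\langle ca_1, ca_2\rangle$ is a (possibly trivial) conflict. For condition (c), I would argue: $a_2$ matches $mask_{a_1}^{ca_1}$ on $ca_2$ iff for every $i$ with $(mask_{a_1}^{ca_1})_i$ a constant — i.e.\ with $a_1(v_i)$ a constant, equal to that mask entry — we have $a_2(v'_i) = a_1(v_i)$ or $a_2(v'_i) \in \Deltan$. Split on the nature of $a_2(v'_i)$: if $a_2(v'_i)$ is a null, the matching clause is satisfied vacuously, and the overlap clause ``if both constants then equal'' is also satisfied vacuously, so these agree; if $a_2(v'_i)$ is a constant, the matching clause demands $a_2(v'_i) = a_1(v_i)$, which is exactly what the overlap clause demands when $a_1(v_i)$ is also a constant. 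Quantifying over all $i$ gives the equivalence of the overlap condition (c) with ``$a_2$ matches $mask_{a_1}^{ca_1}$ on $ca_2$'', completing the equivalence of the first two bullets; the first and third are identical after the symmetric renaming.

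I do not expect a genuine obstacle here — the proposition is essentially a definitional bookkeeping lemma. The only subtlety to be careful about is the asymmetry in how the mask is built versus how matching is tested: the mask records constants coming from $a_1$ and puts $*$ elsewhere, while matching only imposes a constraint at the constant (non-$*$) entries, and tolerates nulls on the $a_2$ side. One must check that this asymmetry is harmless, i.e.\ that it does not cause the second and third bullets to express different things — which is precisely why the proposition states both directions separately and why the symmetry of the overlap definition (it treats $\langle v_i,v'_i\rangle$ as an ordered pair but the two defining clauses are symmetric in $a_1,v$ versus $a_2,v'$) must be invoked to close the loop. A secondary point worth a sentence is that the ``$*$'' entries of the mask correspond exactly to positions where $a_1(v_i)$ is a null, and at such positions the overlap definition imposes no agreement constraint either, so dropping them loses nothing.
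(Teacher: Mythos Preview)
Your proposal is correct. The paper does not include its own proof of this proposition --- it is stated and then immediately used, being treated as an immediate consequence of the definitions of overlap, conflict area, conflict, conflict mask, and matching. Your definitional unfolding and term-by-term case analysis is exactly the natural (and essentially the only) way to verify the equivalence, and your handling of the one subtle point --- the apparent asymmetry between mask construction (on the $a_1$ side) and matching (on the $a_2$ side), resolved by the symmetry of the overlap definition --- is accurate.
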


\noindent \textbf {Using the Conflict Graph and conflict masks}. We will use conflict masks and their matching in conjunction with the Conflict Graph to direct and filter new assignment discovery. By its edges, the Conflict Graph will allow us to only consider specific assignments (corresponding to neighbour tgds). Conflict masks derived from the vertex conflict areas will further act as search masks for matching assignments. Intuitively, in such searches, constants will impose the search for matching constants, whereas the wildcard ($*$) character allows any value on the corresponding position.

Once one is sure that all assignments matching a conflict mask for a given fd have been added to a Saturation Set, the  conflict mask can be {\em marked as exhausted} for the fd and never be employed again in future searches. Furthermore, {\em less permissive} masks for the same fd can be discarded too. We formalize this by the concept of {\em mask subsumption}:
\begin{definition}[Mask subsumption]
Let $msk$ and $msk'$ be two conflict masks of the same size. We say that $msk$ {\em subsumes} $msk'$, denoted by $msk \succeq msk'$, iff for all $i$, $msk_i = msk'_i$ or $msk_i$=$*$.
\end{definition}


\noindent \textbf{Putting it all together.}
Based on the concepts defined above we are ready to present the final, refined and optimized version of our algorithm. Changes on the main loop concern solely the construction of the Conflict Graph, prior to all other operations. This implies the addition at the very beginning of our algorithm of the line:\\

$G^c_{\Sigma_{st}}$ = BuildConflictGraph($\Sigma_{st}$);\\

\begin{algorithm}[h!]
\nonl \textbf{Subroutine} BuildAndChaseSaturationSet\\
\nonl {\em\ \ \ Can raise Exception:ChaseFail}\\
\KwIn{$seed$}
\nonl \textbf{Globals: $A_m \forall m \in \Sigmast$, $G^c_{\Sigma_{st}}$}\\
\KwOut{Saturation Set $\Ssat$}

$NewAssignments$ = $\emptyset$\;
$\Ssat = \emptyset$\;
$UsedMsk = \emptyset$\;
AddAndProcessAssignment($seed$, Tgd($seed$))\;
\While{$NewAssignments \neq \emptyset$}
{ 
	$a$ = SelectArbitraryAsg($NewAssignments$)\;\label{picknewfrontasg}
    $NewAssignments - = \{a\}$\;
    \ForEach {$ca=<cv, f>  \in$ Areas($G^c_{\Sigma_{st}}$,Tgd($a$))}
    {\label{startsearch}
    	$msk$ = $mask_a^{ca}$\;\label{buildmask}
    	\If {$\exists <msk',f>\in UsedMsk$ s.t. $msk'\succeq msk$}
    	{
    		$Continue$\; \label{abortsearch}
    	}
    	\ForEach {$a' \in A_{Tgd(a)}$ matching $msk$ on $ca$}
        {\label{checktrivial}
        	AddAndProcessAssignment($a'$)\;
        }
        \ForEach {$m'$ = Neighbour($G^c_{\Sigma_{st}}$, Tgd($a$)) 
        \textbf{and\ } $ca'=<cv',f> \in$ Areas($G^c_{\Sigma_{st}}$,$m'$)}
    	{\label{getneighbours}
    		    \ForEach {$a' \in A_{m'}$ matching $msk$ on $ca'$}
        		{\label{getassignments}
        			AddAndProcessAssignment($a'$)\;
        		}
        }
        $UsedMsk \cup = \{<msk,f>\}$\;
    }\label{endsearch}
    	
}
\Return $\Ssat$\;
\ \ \\
\textbf{macro} $AddAndProcessAssignment(asg)$:\\
\ \ $NewAssignments \cup= \{asg\}$\;
\ \ $\Ssat \cup = \{asg\}$\;
\ \ $A_{Tgd(asg)} -= \{asg\}$\;
\ \ ApplyEgdsToTermination($\Ssat, asg$)\;
\caption{BuildAndChaseSaturationSet, optimized\label{buildsatv1}}
\end{algorithm}

As expected, the bulk of changes concerns the $BuildAnd\-ChaseSaturationSet$ subroutine, whose final, optimized version is depicted by Algorithm \ref{buildsatv1}. For conciseness, we have grouped the sequence of operations systematically applied to all newly added assignments in a macro called $AddAnd\-ProcessAssignment$. Note that these operations do not vary from our initial version of the subroutine. The major changes, in turn, concern the new assignment discovery. We indeed {\em replace and refine} the loop of line \ref{expandsat} of Algorithm \ref{buildsatv0} with lines \ref{startsearch}-\ref{endsearch} of Algorithm \ref{buildsatv1}. We detail below the purpose and content of these changes.

As in our initial version of the subroutine, we deal with the case of expanding the Saturation Set by reaching out to new assignments that overlap with an already present assignment. To this purpose however we now rely on the Conflict Graph and conflict masks. Our search for new assignments overlapping with a given assignment $a$ starts with  by iterating on the conflict areas that annotate the Conflict Graph node corresponding to the s-t tgd of $a$, which we will hereafter denote by $m_a$. This takes place on line \ref{startsearch}.

For each conflict area, we construct the corresponding conflict mask (line \ref{buildmask}). If the constructed mask (or a mask that subsumes it) 
has been previously used in conjunction with the fd corresponding to the area, the search is aborted (line \ref{abortsearch}), as we are sure that it will not yield any new assignments.

Otherwise, the search for new assignments continues by first considering trivial conflicts (line \ref{checktrivial}) on the given conflict area, between $a$ and other assignments of $m_a$. Once trivial conflicts have been examined, non-trivial conflicts are then considered. This in turn will imply the usage of the the Conflict Graph, to detect all conflicts of $m_a$ on $ca$: neighbour s-t tgds $m'$ conflicting with $m_a$ on $ca$ are identified and specific conflict areas $ca'$ determined (line \ref{getneighbours}).

Note that for both trivial conflicts and other, Conflict Graph edges-based conflicts, we use conflict areas and the constructed conflict mask to evaluate the overlap as a matching of a mask (lines \ref{checktrivial} and \ref{getassignments}). This indeed corresponds to our previous alternative characterization (by Proposition \ref{prop:conflict-overlap}) of overlaps. 

\subsection{Parallelization}
Before stating the correctness of our algorithm and proceeding to its practical evaluation, we note a last optimization opportunity, namely a {\em parallelization} opportunity. 

Since Saturation Sets are discovered and chased in the same time,
we cannot directly envision their parallel processing. 
However, it turns out that there exist, for a given assignment set, {\em groups of Saturation Sets} that can be treated in parallel. Such groups are induced by the connected components of the Conflict Graph.

Indeed, note that assignments for s-t tgds belonging to two distinct connected components in the Conflict Graph will never overlap (since overlap would imply conflict). Since our Saturation Set discovery is based on overlap, this in turns implies that for any two assignments placed in the same Saturation Set by our $BuildAndChaseSaturationSet$ subroutine, their corresponding s-t tgd nodes belong to the same Conflict Graph connected component. 

Assuming a call $Tgds(c)$ that returns all s-t tgds whose corresponding nodes belong to a connected component $c$ of our conflict graph $G^c_{\Sigma_{st}}$, we can thus modify lines \ref{bigloop}--\ref{endsatdesc} in Algorithm \ref{mainloopv0} to
\begin{algorithm}[h]
\ForEach{Connected Component $c$ of $G^o_{\Sigma_{st}}$ {\bf in parallel}}
{
\While{$\exists m \in Tgds(c)$ s.t. $A_m \neq \emptyset$}
{
	$seed$ = SelectArbitraryAsg($A_m$)\;
	$\Ssat$ = BuildAndChaseSaturationSet($seed$)\;
    //\textit{add result to the target instance}\\		 
    $J = J \cup$ Materialize($\Ssat$)\;
}
}
\end{algorithm}

We will call this version of our algorithm the {\em \ouralgo\ With
Parallelization} and retain it for the experimental evaluation hereafter.
We further show that the following holds\footnote{We refer the reader to the appendix for a full proof of this result.}:

\begin{theorem}\label{TH:COR}
If the Oblivious Chase fails \footnote{\label{note:iso}In virtue of
Proposition \ref{prop:iso}, we use a generic term for "Oblivious Chase
failure" or "Oblivious Chase result" regardless of the specific sequence}
then both the \ouralgo\ and the \ouralgo\ With Parallelization fail.
Else, any solution issued by running the \ouralgo\ or the \ouralgo\ With
Parallelization is isomorphic to the result of the Oblivious Chase.
\end{theorem}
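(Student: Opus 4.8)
The plan is to show that the \ouralgo, once stripped of its bookkeeping optimizations, simply executes one particular terminating Oblivious Chase sequence, and then to conclude via Proposition~\ref{prop:iso}. The first step is a purely structural observation: every call to \textit{AddAndProcessAssignment} performs one Oblivious Chase tgd step — it picks a body assignment extended with fresh nulls; since s-t tgd bodies range only over the never-modified source instance, each assignment of the (fixed, finite) initial set is always applicable and is consumed exactly once — followed by a finite run of \textit{ApplyEgdsToTermination}, which is a finite sequence of Oblivious Chase egd steps. Concatenating these, over the successively built Saturation Sets and in the internal order dictated by \textit{BuildAndChaseSaturationSet}, yields a legitimate Oblivious Chase sequence. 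Termination of the algorithm itself follows because the assignment set is finite, each Saturation Set consumes at least its seed, the frontier loop consumes one assignment per iteration, the loops over conflict areas and neighbours are finite, and \textit{ApplyEgdsToTermination} terminates (no termination issues arise for s-t tgds plus target fds). Hence the algorithm either raises \textit{ChaseFail} or halts with a completed trace.

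The core step is to prove that each set $\Ssat$ returned by \textit{BuildAndChaseSaturationSet} is genuinely a Saturation Set, i.e.\ closed under \emph{collision}. The subroutine closes $\Ssat$ under the \emph{overlap} relation restricted to the not-yet-consumed assignments; by Proposition~\ref{prop:overlap_collision}, collision implies overlap \emph{at every lifecycle stage}, so even though early egd application may erase some overlaps, any assignment colliding with a member of $\Ssat$ was overlapping with it at the very moment the subroutine inspected the pair (the inspected member in whatever egd-modified form, the candidate still in initial form since it is not yet picked), and was therefore added. For the optimized version one additionally invokes Proposition~\ref{prop:conflict-overlap} to see that the Conflict-Graph / conflict-mask search enumerates exactly the overlapping assignments, and checks that the pruning on line~\ref{abortsearch} is sound: if $msk' \succeq msk$ and $msk'$ has already been processed, every assignment matching the more restrictive $msk$ also matched $msk'$ and so is already in $\Ssat$; self-overlaps are covered by the trivial-conflict branch together with egd-termination. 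Iterating, the sets $\Ssat^{(1)},\Ssat^{(2)},\dots$ produced by the main loop form a partition of the assignment set into Saturation Sets: an induction shows that nothing in $\Ssat^{(k)}$ collides with anything in an earlier $\Ssat^{(j)}$ (because $\Ssat^{(j)}$ is already a Saturation Set with respect to the full assignment set) nor with anything still available at stage $k$ (overlap-closure). Since collision is chase-sequence and lifecycle independent (Propositions~\ref{prop:collision_fixed} and~\ref{prop:overlap_collision}), this holds regardless of which assignments the algorithm happened to pick first.

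Third, the partition yields both the soundness of ``save and flush'' and the termination of the trace as an Oblivious Chase sequence. By the invariant ``before adding an assignment, $\Ssat$ is egd-terminated'', the only new egd triggers after an addition involve the freshly added assignment, so after \textit{ApplyEgdsToTermination} the whole $\Ssat$ is egd-terminated. No egd step can ever fire between an atom of $\Ssat^{(k)}$ and an atom outside it, nor between two distinct flushed Saturation Sets, for such a step would be an interaction — i.e.\ a collision — between assignments placed in different parts of the partition, which we just ruled out. Hence flushing $\Ssat^{(k)}$ and never revisiting it loses nothing, and the materialization of all flushed sets at the end is globally egd-terminated while all tgd steps are exhausted; so the completed trace is a \emph{terminating, non-failing} Oblivious Chase sequence, and by Proposition~\ref{prop:iso} its result is isomorphic to the (unique) Oblivious Chase result. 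Conversely, if the Oblivious Chase fails, then by Proposition~\ref{prop:iso} every Oblivious Chase sequence fails, so the algorithm cannot halt with a completed non-failing trace; being terminating, it must raise \textit{ChaseFail}. For the parallel variant, overlap implies a Conflict-Graph edge, so (via collision $\Rightarrow$ overlap) every Saturation Set lies inside a single connected component; running components concurrently is then merely an interleaving of their per-component traces, and since no egd step ever crosses components, the interleaving is again a terminating Oblivious Chase sequence, so the same conclusion applies.

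The main obstacle I expect is the second step: rigorously proving that $\Ssat$ is closed under collision in spite of (a) the lifecycle-dependence of overlap and the early egd application that shrinks overlaps — which forces one to apply Proposition~\ref{prop:overlap_collision} at precisely the right lifecycle stages and to track carefully when a pair of assignments is actually tested by the subroutine — and (b) the optimized search with conflict masks, subsumption and mask exhaustion, where one must certify that no overlapping assignment is ever skipped. Once that lemma and the partition-into-Saturation-Sets claim are established, the remaining reduction to Proposition~\ref{prop:iso} is essentially mechanical.
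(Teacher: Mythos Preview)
Your overall strategy coincides with the paper's: establish that the calls to \textit{BuildAndChaseSaturationSet} partition the assignment set into genuine Saturation Sets, argue that concatenating the per-set chase sequences yields a terminating Oblivious Chase sequence on the whole assignment set, and invoke Proposition~\ref{prop:iso}. Your treatment of the closure-under-collision step, including the soundness of the mask-subsumption pruning, is essentially the paper's Lemma~\ref{lemma:point2} and is correct; you have also correctly located this as the delicate part.

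There is, however, a real gap in your third step. You argue that no egd \emph{trigger} can straddle two Saturation Sets (such a trigger would witness a cross-set collision), and from this alone you conclude that ``flushing $\Ssat^{(k)}$ and never revisiting it loses nothing'' and that the concatenated trace is a legitimate Oblivious Chase sequence. But an Oblivious Chase egd step does more than fire a trigger: it replaces the chosen null \emph{throughout the entire current instance}, whereas the algorithm's \textit{ApplyEgdsToTermination} rewrites only inside the current $\Ssat$. For these to coincide you must show that the null being replaced does not also occur in some already-flushed $\Ssat^{(j)}$; otherwise the flushed materialization is stale relative to what the Oblivious Chase would produce, and your trace is not an Oblivious Chase sequence at all. ``No cross-set trigger'' does not by itself give ``no cross-set null sharing'': two assignments can come to share a null without ever directly interacting, via a chain of intermediate egd replacements. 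The paper closes this hole with a separate result (Proposition~\ref{prop:nulls}): by induction on the chase sequence, any two distinct assignments sharing a null are linked by a \emph{chain of collisions} and hence belong to the same Saturation Set. This null-locality lemma is precisely what makes the ``append the per-set sequences'' construction (the paper's Lemma~\ref{lemma:point1}) valid. You need some version of it; the simplest, tailored to the algorithm's own trace, is an induction showing that the nulls appearing in $\Ssat^{(k)}$ remain disjoint from those in every earlier $\Ssat^{(j)}$, using that fresh nulls are initially disjoint across assignments and that every egd replacement value comes from the (intra-$\Ssat$) trigger itself.
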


\begin{proof}[Sketch] 
We rely on two essential observations: (i) disjoint Saturation Sets completely partitioning the assignment set can be {\em chased individually} and the results {\em pieced together} to obtain a solution, and (ii) the outputs of our
$BuildAndChaseSaturationSet$ subroutine are disjoint Saturation Sets, chased to termination, and completely partitioning the assignment set. To show that each subroutine call returns a Saturation Set, we will in turn rely on the relations between conflicts and overlaps, and overlaps and
collisions respectively, thus on Propositions \ref{prop:conflict-overlap}
and \ref{prop:overlap_collision}. 
\end{proof}

\section{Experimental study}

We have implemented the Interleaved Chase With Parallelization in
a DE engine written in Java and dispatching intermediate SQL calls. We dedicate this section to studying our algorithm's practical behaviour. We start by assessing its standalone performance w.r.t. parameters such as the size of the source instances and the number of s-t tgds and target fds. We continue by a comparison with state-of-the-art DE engines. All our experiments were run on an 8-cores, 3.50 Ghz processor and 28GB RAM computer under Linux Debian. We used Postgres 9.4 as an underlying DBMS for both our system and the DE engines we compare to.\\

\begin{figure}[t!]
	\centering
	 \scalebox{0.7}{\begin{tabular}[h!]{|c|c|c|}
			\multicolumn{3}{l}{}\\
			\multicolumn{3}{l}{\textbf{Scenarios}}\\
			\hline {\it SCENARIO}  & {\it s-t tgds} &  {\it egds} \\ \hline
			OF & 30 & 10 egds \\ \hline
			OF+ & 30 & 20 egds \\ \hline
			OF++ & 30 & 30 egds \\ \hline	
		\end{tabular}}
	
	\centering
	\subfigure[\scriptsize Varying the size of the source instance.]{
		\includegraphics[width=6.5cm]{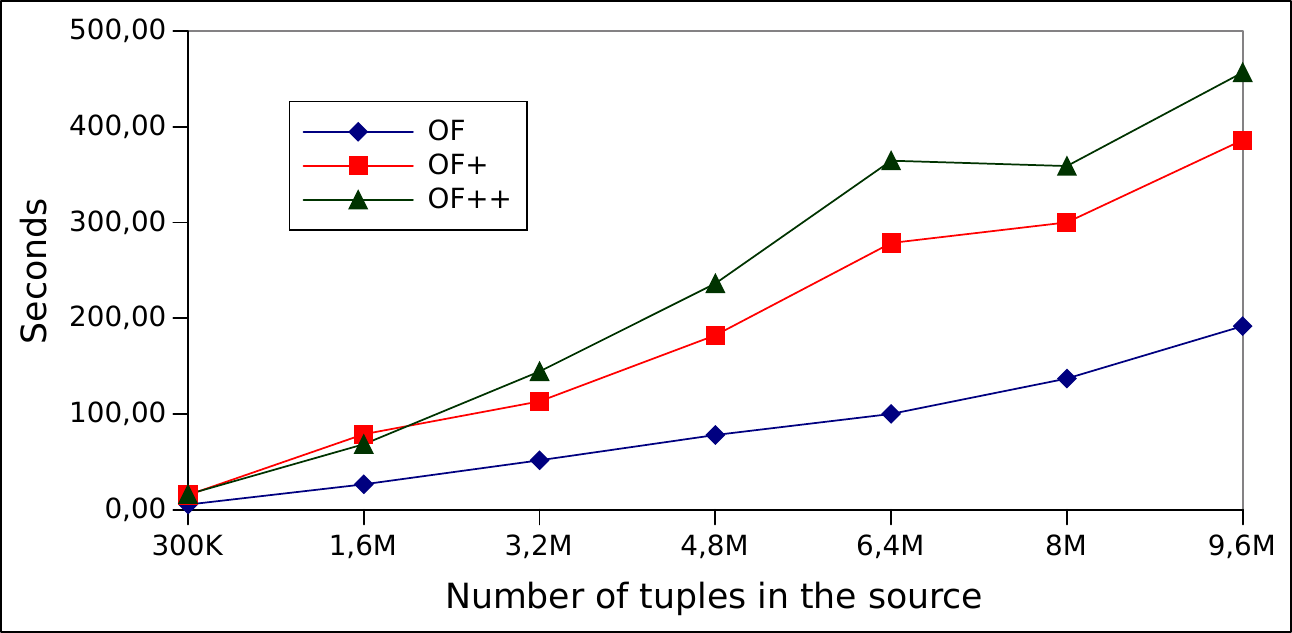}
	}
       \centering
       \scalebox{0.7}{\begin{tabular}[h!]{|c|c|c|c|c|c|}
			\multicolumn{5}{l}{}\\
			\multicolumn{5}{l}{\textbf{Scenarios}}\\
			\hline {\it SCENARIO}  & {\it s-t tgds}  & {\it OF}  & {\it OF+} & {\it OF++}  &  {\it \# source tuples} \\ \hline
			A & 3 & 1 egds & 2 egds & 3 egds & 450K \\ \hline
			B & 4 & 1 egds & 2 egds & 3 egds & 600K \\ \hline
			C & 5 & 1 egds & 2 egds & 3 egds & 750K \\ \hline
		 	D & 6 & 1 egds & 2 egds & 3 egds & 900K \\ \hline
		 	E & 7 & 1 egds & 2 egds & 3 egds & 1.05M \\ \hline
			
		\end{tabular}}

\centering
	\subfigure[\scriptsize Varying more parameters, no parallelization.]{
	\includegraphics[width=6.5cm]{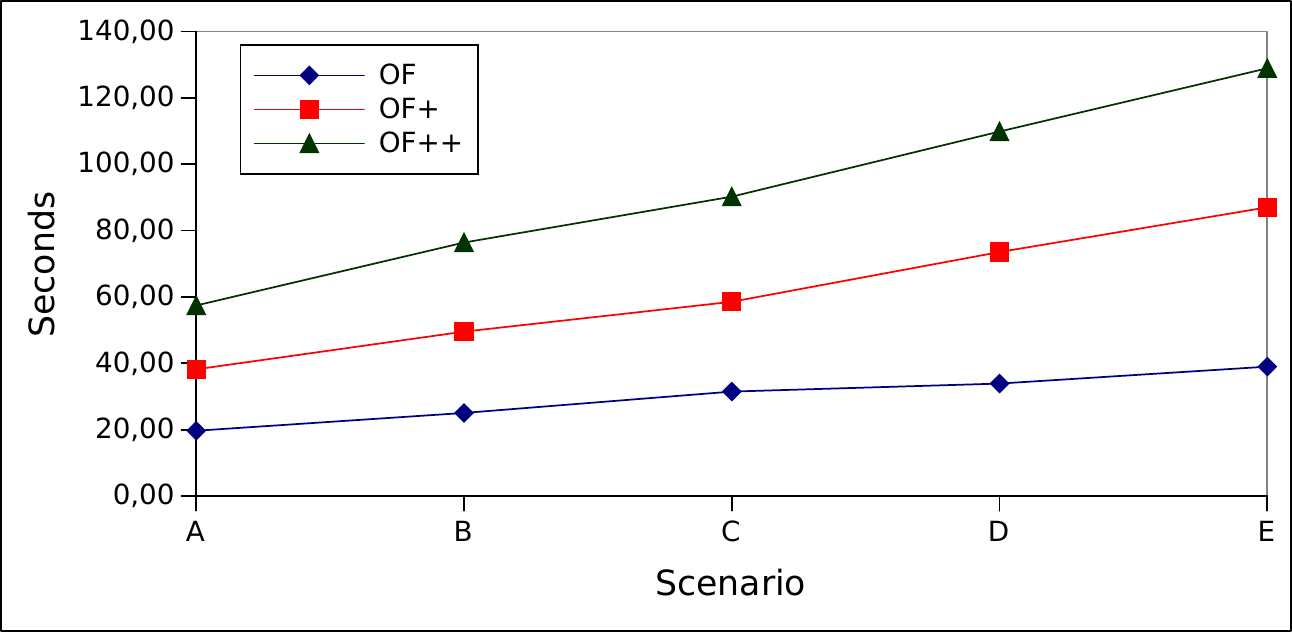}
	}
	
	\centering
	\scalebox{0.7}{\begin{tabular}[h!]{|c|c|c|c|c|c|}
			\multicolumn{5}{l}{}\\
			\multicolumn{5}{l}{\textbf{Scenarios}}\\
			\hline {\it SCENARIO}  & {\it s-t tgds}  & {\it OF}  & {\it OF+} & {\it OF++}  &  {\it \# source tuples} \\ \hline
			A & 15 & 5 egds & 10 egds & 15 egds & 500K \\ \hline
			B & 30 & 10 egds & 20 egds & 30 egds & 1M \\ \hline
			C & 45 & 15 egds & 30 egds & 45 egds & 1.5M \\ \hline
			D & 60 & 20 egds & 40 egds & 60 egds & 2M \\ \hline
			E & 75 & 25 egds & 50 egds & 75 egds & 2.5M \\ \hline
			F & 90 & 30 egds & 60 egds & 90 egds & 3M \\ \hline
			
		\end{tabular}}
		
	\centering
	\subfigure[\scriptsize Varying more parameters, with parallelization.]{
	\includegraphics[width=6.5cm]{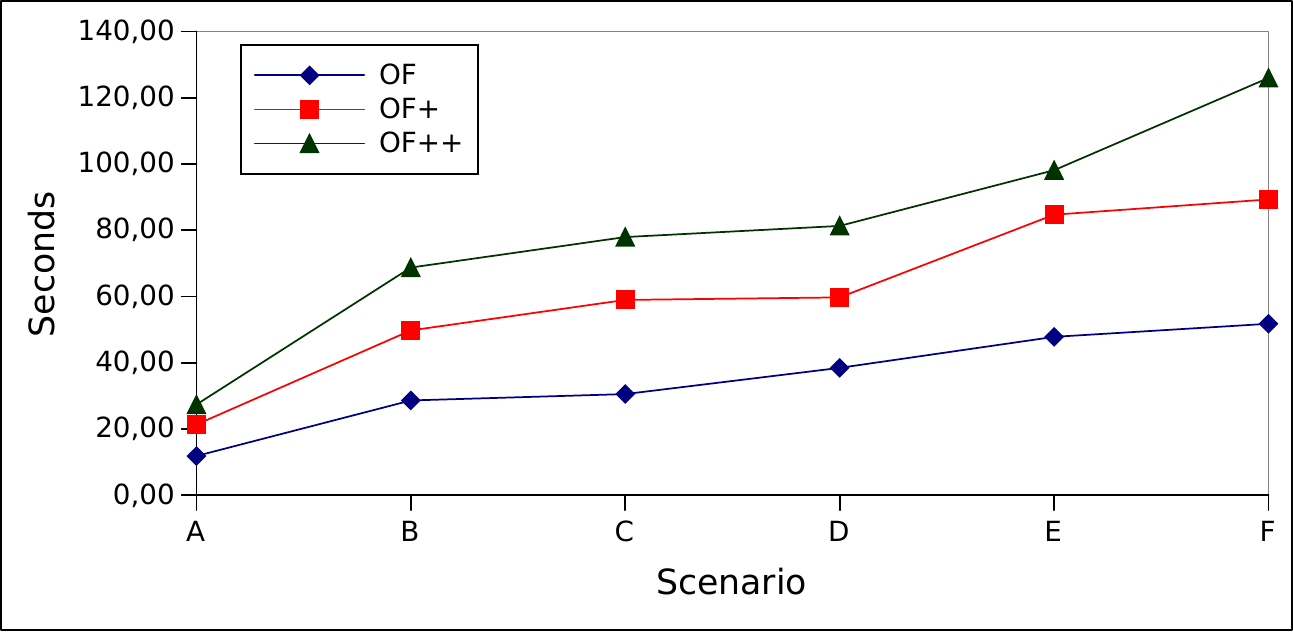}
	}
	
	\caption{\label{fig:exp-red}Scalability of our algorithm w.r.t. various parameters and benefits of parallelization.}
\end{figure}

\noindent \textbf{Benchmarking scenarios}. To evaluate our algorithm, we have used scenarios generated by using iBench \cite{ACGM15}, a recent data integration benchmark for arbitrarily large and complex schema mappings, schemas and schema constraints. 
We have considered three sets of scenarios: (i) $\mathsf{OF}$ scenarios generated with the default iBench {\em
object fusion} primitive; (ii) $\mathsf{OF^{+}}$ scenarios,
generated by combining the {\em object fusion} and {\em vertical partitioning} iBench primitives, leading to s-t tgds with two atoms in the head; (iii)
$\mathsf{OF^{++}}$ scenarios, obtained via a newly created iBench primitive
that provides settings similar to $\mathsf{OF^{+}}$ but with s-t tgds
of increasing complexity (i.e. having three atoms in the head). 
Our iBench-based experimental scenarios are available at \cite{michelespage}.
\\

\noindent {\bf Standalone performance and scalability}. In our first experiment, we have fixed the number of s-t tgds and egds for each type of scenario (table of Figure \ref{fig:exp-red}(a)), and we varied
the number of tuples in the source instances from $300K$ (thousands) to $9.6M$ (millions). 
The goal of this experiment was to measure our algorithm's performance for source instances of increasing size. 
Figure \ref{fig:exp-red}(a) shows how this time quasi-linearly scales w.r.t. the number of source tuples. 

We further tested our algorithm's scalability w.r.t. more parameters, including not only the source instance size but also the number of dependencies, as shown in the tables of Figure \ref{fig:exp-red}(b) and (c). 
These further experiments are also targeted towards the analysis of the benefits of paralellization.

Figure \ref{fig:exp-red}(b) shows the solution generation time for scenarios in which we gradually increase the size of the Conflict Graph (by adding more nodes as s-t tgds and more edges based on egds), while ensuring this graph consists of only one connected component. 
Since the Conflict Graph is completely connected we do not exploit any parallel computation. 
As such, this experiment can be seen a test for the Interleaved Chase without parallelization. 
The trend in Figure \ref{fig:exp-red}(b) shows that our algorithm does not remarkably suffer from adding nodes and edges to the Conflict Graph and that the overhead is quite sustainable. 

We then consider (Figure \ref{fig:exp-red}(c)) Conflict Graphs comprising several connected components, in order to study the impact of parallel evaluation. 
We start from the previous scenarios and progressively add batches of $15$ s-t tgds, yielding $15$ additional nodes and $5$ additional connected components. 
This experiment shows how leveraging the Conflict Graph's structure and parallel evaluation allows mitigating the overhead brought by the increasing number of tuples in the source and the increasing number of dependencies. 
Indeed, the reduced egd application scope, paired with the parallelization on connected components of the Conflict Graph, allows us to stay within the same order of magnitude as previously (Figure \ref{fig:exp-red} (b)). 
This is, on our opinion, an interesting and promising result.
{\em Indeed, to the best of our knowledge, we are not aware of other chase engines capable of evaluating these high numbers of constraints in Data Exchange}. \\

\noindent {\bf Comparison with state-of-the-art DE engines}.
We further set to compare our algorithm with the latest DE engines publicly available, more precisely with an all-SQL DE engine, ++Spicy \cite{MarnetteMPRS11}, and a Java-based custom chase engine, namely Llunatic \cite{GeertsMPS14} \footnote{We have used ++Spicy Version 1.1 (July 2015) \cite{spicy} and 
Llunatic Version 1.0.2 (Nov. 20, 2015) \cite{llunatic}, respectively. We omit the comparison with another Java-based custom chase engine \cite{PichlerS10}, since Llunatic already outperforms it.}. 
While the former is a pure mapping chase engine, the latter is a mapping and cleaning chase engine, out of which we only use the mapping (DE) part (i.e. we do not leverage cleaning constraints). 
Moreover, in order to ensure a fair comparison with ++Spicy, we switch off its core computation. 

When running our experiments, we have set a timeout threshold at 600 seconds (10 minutes), considering that beyond this threshold the practical appeal of achieving solution computation decreases. 
Figure \ref{fig:Comparison}(a) shows the results in terms of solution generation times for the three systems. 
One can note that our algorithm exhibits performance similar to ++Spicy's up to a cut-off point (1.6M tuples); then, {\em by contradicting the common wisdom that custom chase engines are less efficient than all-SQL ones}, our system exhibits better performance for large sizes of the source instance. 
Llunatic's execution on the other hand unfortunately reached the timeout threshold on all tested scenarios, without outputting a solution. 
We recall however that Llunatic is a system that achieves mapping and cleaning at the same time (which we do not). 
Presumably thus, Llunatic is not strongly optimized for a "mapping standalone" mode in the DE scenarios under test. 

Importantly, note that in this comparative experiment we have focused on the $\mathsf{OF}$ scenarios, i.e. the baseline scenarios considered in our study (with parameters in the first table of Figure \ref{fig:exp-red}). 
We did so because more complex scenarios such as $\mathsf{OF^{+}}$ and $\mathsf{OF^{++}}$ {\em cannot be handled by ++Spicy} without additional source dependencies being explicitly provided. 
$\mathsf{OF^+}$ and $\mathsf{OF^{++}}$ scenarios on the other hand {\em can} be handled by Llunatic, but the resulting run meets the timeout threshold in a similar manner as for $\mathsf{OF}$. To mitigate the arguably subjective choice of the timeout threshold (and its effect on the Llunatic measured times), as well as to further test and validate our algorithm's performance, we ran an additional comparative experiment involving Llunatic. This time, our experiment uses the largest scenarios provided with the Llunatic chase engine \cite{llunatic}, adapted to the settings that both our engine and Llunatic can handle, i.e. s-t tgds and target fds. Figure \ref{fig:Comparison} (b) shows the results of this additional comparison, which again turn out to be advantageous for our algorithm. \\

\noindent \textbf{Discussion}. Summarizing, our algorithm gracefully scales thanks to diminishing the egd overhead by reducing the scope of application thereof. Furthermore, the parallel execution leads to scalability in increasingly complex scenarios with high numbers of constraints. This allows us to stand the comparison not only with custom chase engines, but also with all-SQL DE engines on the classes of scenarios for which the latter have improved performances. 

\begin{figure}[t!]
	\centering	
	\subfigure[\scriptsize Our algorithm versus Llunatic and ++Spicy on $\mathsf{OF}$ scenarios.]{
		\includegraphics[width=6cm]{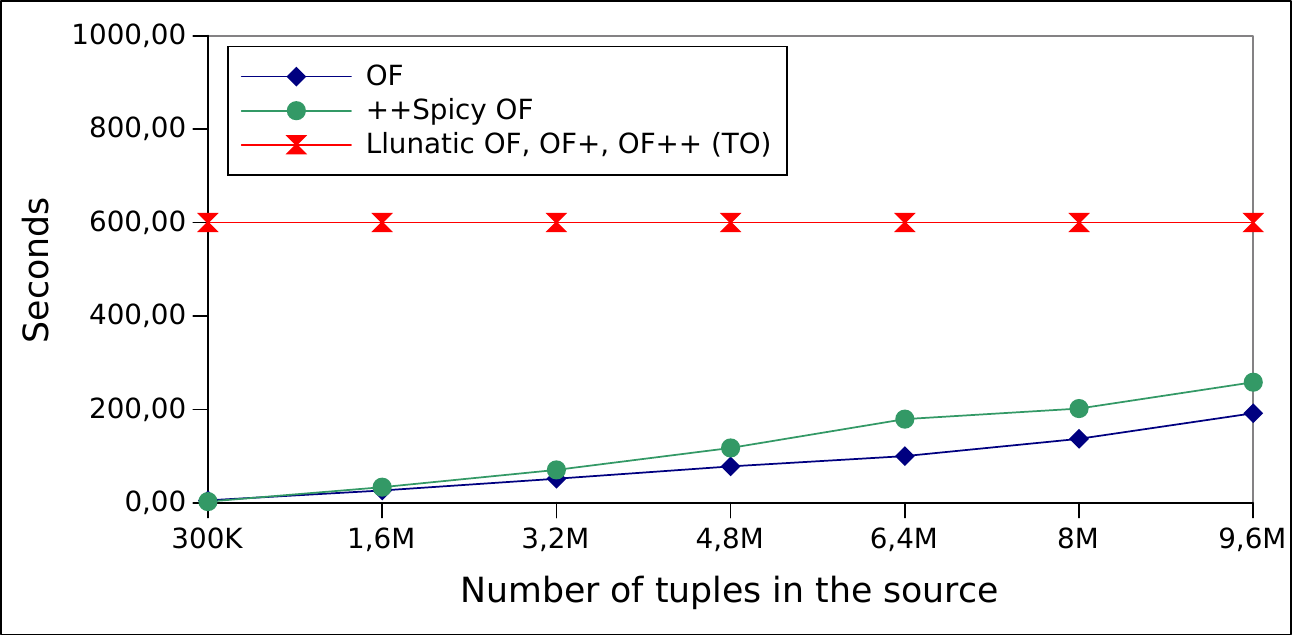}
	}
	
	\scalebox{0.7}{\begin{tabular}[h!]{|c|c|c|c|}
							\multicolumn{4}{l}{}\\
					\multicolumn{4}{l}{\textbf{Scenarios}}\\
							\hline {\it Scenario}  & {\it s-t tgds}  & {\it egds} & {\it \# source tuples} \\ \hline
							Employees & 8 & 2 egds & 500K \\ \hline
							Employees & 8 & 2 egds & 1M \\ \hline
							Workers & 7 & 2 egds & 500K \\ \hline
						\end{tabular}}

	\subfigure[\scriptsize Our algorithm versus Llunatic on Llunatic's scenarios.]{
		\includegraphics[width=5.5cm]{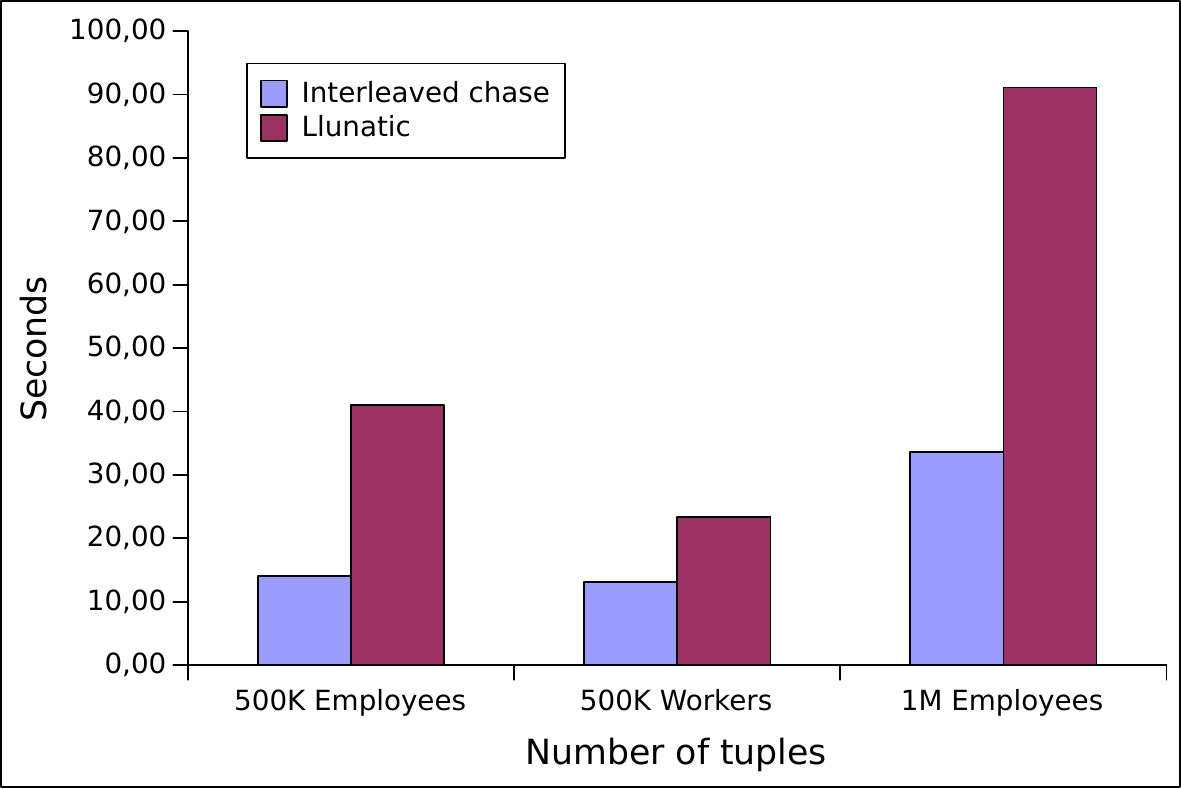}
	}

	\caption{\label{fig:Comparison}Comparison between our algorithm and state-of-the-art DE engines.}\hspace*{-10cm}
\end{figure}

\section{Related work}
Equality-generating dependencies have been largely disregarded as a class of target constraints in Data Exchange settings and have started to be taken into consideration only recently \cite{GottlobPS11,MarnetteMP10}. 
In \cite{GottlobPS11}, normalization and optimization techniques are applied to s-t tgds, and extended under particular assumptions to target egds. 
One such assumption is the presence of source constraints, which is also adopted in \cite{MarnetteMP10} to obtain a first-order rewriting of a set of s-t tgds along with a set of egds, in order to leverage the efficiency of pure SQL. 
Our approach is more general with respect to the one presented in \cite{MarnetteMP10}, as it also covers sets of constraints that are not first-order rewritable. 
On the other hand, the algorithm in \cite{MarnetteMP10} features core computation, which we do not provide.  
We were able to compare our algorithm with ++Spicy \cite{MarnetteMP10} only for scenarios that are 
FO-rewritable, as it has been presented in Section 5.
Despite the fact that ++Spicy is based on pure SQL, we have shown that our custom chase exhibits better performance on large instances in our tested scenarios. 
The Llunatic mapping and cleaning system \cite{GeertsMPS14} is tackling the problem of mapping and cleaning in the same chase step. 
The Llunatic chase efficiently covers a range of scenarios, such as repairing source data by also exploiting user feedback, that are not covered in our algorithm. 
Nevertheless, since Llunatic's mapping engine can also be used in a standalone mode and Llunatic has been shown to exhibit good DE performance, we have presented  comparative experimental results with Llunatic in Section 5. These experiments show the superiority of our algorithm on the tested DE scenarios, by up to several orders of magnitude.  

We propose the Interleaved Chase, a chase-based algorithm that interleaves target fd and s-t tgd evaluation, aiming for the construction of smaller intermediate chase results. 
Our algorithm stands in between the Classical DE Chase, where the entire target instance is constructed 
prior to applying egds, and the general Oblivious variant proposed by \cite{GrahneO14} which, differently from the 
Oblivious Chase in \cite{CaliGK13}, does not rely on any order of application of tgds and egds.
We recall also that the Oblivious Chase is the simplest chase variant to implement in DE, and is in fact underlying the Classical DE Chase inside the two halves of s-t tgd and egd evaluation.
In \cite{CaliGK13}, an explicit ordering of the Oblivious Chase is introduced to detect innocuous egds, i.e. egds that do not affect tgds and can be ignored in subsequent chase steps. 
The notion of innocuous egds is complementary to the notions presented in this paper and it would be particularly interesting to see how the two could be combined for further improvement. 
Moreover, we do not make any assumption on the fragment of tgds that we can handle, and it would be useful to see how our results could be applied and/or refined for the weakly guarded tgds considered in \cite{CaliGK13}. Last but not least, our static analysis guided by the Conflict Graph lets us obtain more aggressive optimizations, that turn out to be useful in practice.  

Very recently, there has been renewed interest in optimization issues around query answering. 
\cite{CateHK16} presents a practical study of query answering under inconsistency-tolerant semantics. Such study revolves around specific exchange-repair semantics and focuses on the performance of query answering, while our work 
focuses on the efficient interplay of tdgs and egds during the chase for materializing universal solutions, which are meaningful under the classical certain answers semantics of DE. It would be interesting to see how the two approaches can be combined. 
As shown in \cite{georgPods14}, guarded existential rules, that ensure the decidability of query answering tasks, can be translated into plain Datalog rules, in order to 
possibly leverage Datalog optimization techniques. 
However, Datalog rewriting is orthogonal to our approach and a comparison of the performances of the two complementary approaches falls beyond the scope of our work and is left for future investigation. 


The chase procedure is applicable beyond DE, in particular to query reformulation under constraints \cite{DeutschPT99, KonstantinidisA14, BenediktLT15}, including query rewriting with views and access patterns.
The attention devoted to handling egds in these reformulation settings remains however quite limited as well. 
\cite{KonstantinidisA14} presents a chase variant called the Frugal Chase, applicable to both query rewriting and DE, which has the interesting property of yielding smaller universal solutions, even though egds are not covered. 
\cite{BenediktLT15} efficiently deals with query reformulation with constraints and access patterns by exploiting chase-based proofs. The results are however restricted to tgds only. 
\cite{DeutschPT99} and follow-up works devise and refine the versatile Chase\&Backchase algorithm for complete minimal query reformulations, under constraints including general egds. 
However, no specific attention is paid to the task of improving egd
evaluation, and we argue that egd-aimed optimizations, in the spirit of
those that we consider here, could be useful to further boost the performance of the Chase\&Backchase and its variants. 
As it turns out moreover, our algorithm can be directly applicable to query rewriting with views, general query reformulation or other settings implying the usage of the chase procedure, as soon as these settings feature constraints of the type and structure addressed in this paper, i.e. that can be assimilated, via a schema partition, to s-t tgds and target fds.

\balance

\section{Conclusion and Future Work}
In this paper, we have presented a new Data Exchange chase-based algorithm, aimed at efficiently handling general target
functional dependencies. 
We have proved the correctness of our algorithm and shown its performance in an extensive experimental study.
We envision several extensions of our work: the first revolves around the treatment of general target egds, beyond functional dependencies; the second would
lead us to generate (non-standard) recursive SQL queries to implement our
chase with egds; 
the third 
concerns the efficient generation of core solutions \cite{FaginKP05, GottlobN08, CateCKT09}, which is
currently not covered in our algorithm. 
We finally mention a direct, seamless extension of our approach: due to our relying on a DBMS for computing initial s-t tgds assignments, our algorithm can in reality handle larger fragments of s-t tgds, far beyond their conjunctive counterparts and up to the full FO power.\\

\ignore{

\makeatletter
\renewenvironment{thebibliography}[1]{%
\ifnum\addauflag=0\addauthorsection\global\addauflag=1\fi
     \section[References]{
        {References} 
          {\vskip -5pt plus 1pt} 
         \@mkboth{{\refname}}{{\refname}}%
     }%
     \list{[\arabic{enumi}]}{%
         \settowidth\labelwidth{[#1]}%
         \leftmargin\labelwidth
         \advance\leftmargin\labelsep
         \advance\leftmargin\bibindent
         \parsep=0pt\itemsep=1pt 
         \itemindent -\bibindent
         \listparindent \itemindent
         \usecounter{enumi}
     }%
     \let\newblock\@empty
     \raggedright 
     \sloppy
     \sfcode`\.=1000\relax
}

\makeatother
}
\bibliographystyle{abbrv}
{\small

}

\pagebreak
\appendix
\section{Proof of theorem \ref{TH:COR}}

\noindent \textbf{Preamble, notation and vocabulary}. Before proceeding to the proof of Theorem \ref{TH:COR}, we start by fixing some notation. Denoting by $A$ the complete assignment set for a given Data Exchange scenario, we define  a finite chase sequence $ChSeq = s_1,\dots, s_n$ on $S \subseteq A$ as a sequence of tgd and egd chase steps $s_i$, all of which are restricted to $S$. We detail hereafter such a sequence.

For each assignment $a$ in $S$ we denote by $a_{init}$ its initial form and by $a_{s_k}$ its form after chase steps $s_1, \dots, s_k$ have been applied. We denote by $T_{s_k}$ the target assignment set in its current form after the chase steps $s_1-s_k$. As explained previously, tgd chase steps simply consist in adding previously non-considered assignments in $S$ to the target assignment set. A tgd chase step $s_i$ with assignment $a$ will thus lead to $T_{s_i} = T_{s_{i-1}} \cup \{a_{init}\}$. Note that any chase sequence has at maximum one step corresponding to an assignment $a$.

Egd steps on the other hand will attempt to replace values in the images of the assignments in the current target assignment set. That is, an egd chase step $s_i$ is a list of replacement attempts of the form: {\em replace $v^j_1$ by $v^j_2$ (where of course $v^j_1<>v^j_2$) in all the images of the assignments in $T_{s_{i-1}}$} . If any of the $v^j_1$ is a constant then the egd chase steps fails. Else the target set $T_{s_i}$ will be the obtained by applying the replacements above on $T_{s_{i-1}}$. 
The {\em result} of a chase sequence after a non-failing step $s_k$, denoted by $J_{s_k}$, is the database instance obtained by head materialization of all the (current form of the) assignments in the target set $T_{s_k}$. 

In the following, we will only consider chase sequences such that $\forall i<n$ the steps $s_i$ are either tgd steps or non-failing egd-steps. We further say that a chase sequence $ChSeq$ on $S$ is terminating iff 

\begin{itemize}
\item either $s_n$ is a failed egd step, in which case we say that $ChSeq$ is a (terminating) failed chase sequence on $S$, or
\item (i) $ChSeq$ comprises exactly one tgd-chase step for each assignment in $S$ and (ii) no egds apply on $T_{s_n}$. In this case we will say that $ChSeq$ is a terminating successful chase sequence on $S$.
\end{itemize}

We further direct our attention to a very important result which, together with Propositions \ref{prop:collision_fixed}, \ref{prop:overlap_collision} and \ref{prop:conflict-overlap}, will turn out essential for our correctness proof:

\begin{proposition}\label{prop:nulls}
Let $A$ be the complete assignment set for a data exchange scenario and let $ChSeq = s_1, \dots, s_n$ be an arbitrary, non-failing but non-necessarily terminating chase sequence on $A$. Then at any moment during $ChSeq$ and for any null $n \in \Deltan$, if $n$ is simultaneously in the image of two distinct assignments $a$ and $a'$, then for every Saturation Set $S$, $a\in S$ iff $a'\in S$.
\end{proposition}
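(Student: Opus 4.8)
The plan is to first reduce the statement to a purely combinatorial fact about the collision relation on $A$, and then establish that fact by induction along $ChSeq$.

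\emph{Reduction.} By the definition of Saturation Sets, no collision ever joins an assignment of $S$ to an assignment of $A\setminus S$; hence every collision edge stays inside $S$ or inside $A\setminus S$, so each connected component of the undirected collision graph on $A$ is entirely contained in $S$ or entirely contained in $A\setminus S$, and every Saturation Set is a union of connected components. It therefore suffices to prove: if a null $n$ lies simultaneously in the images of two distinct assignments $a$ and $a'$ at some point of $ChSeq$, then $a$ and $a'$ lie in the same connected component of the collision graph. The desired equivalence "$a\in S\iff a'\in S$ for every Saturation Set $S$" is then immediate.

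\emph{The invariant.} I would establish, by induction on the prefix $s_1,\dots,s_k$ of $ChSeq$, the stronger claim: for every null $n\in\Deltan$, the set $O_n$ of assignments that currently have $n$ in their image is contained in a single connected component of the collision graph. The base case is trivial, since the target set is empty. A tgd step only introduces fresh existential nulls, each owned by a single assignment, and changes no other $O_n$, so the invariant is preserved. An egd step falling under case (ii) of the Oblivious egd step replaces a null $n_1$ by a constant: then $O_{n_1}$ becomes empty, every other $O_n$ is unchanged, and no new null-sharing is created. The only substantial case is an egd step falling under case (iii), which replaces a null $n_1$ by a distinct null $n_2$ throughout the current target set: afterwards $O_{n_2}$ equals the pre-step union $O_{n_1}\cup O_{n_2}$ while all other owner sets are unchanged, so by the induction hypothesis it remains only to show that the two components containing $O_{n_1}$ and $O_{n_2}$ coincide.

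\emph{Linking the egd step to a collision, and conclusion.} This step fired because the current instance $J_{s_{k-1}}$ contains two $R$-facts $R(\bar t_1)$ and $R(\bar t_2)$ agreeing on the $A$-positions of the fd and disagreeing on a $B$-position, with $n_1$ at that $B$-position of $\bar t_1$ and $n_2$ at that $B$-position of $\bar t_2$. Since $J_{s_{k-1}}$ is the head materialization of the current target set, $R(\bar t_1)$ is produced by some assignment $p$ and $R(\bar t_2)$ by some assignment $q$; then $n_1$ is in the image of $p$ and $n_2$ in the image of $q$ (and if $p=q$ the two components are trivially the same). Agreement of $\bar t_1$ and $\bar t_2$ on the $A$-positions is exactly the condition that $p$ and $q$ interact, in their current form, on this fd and the induced term pairs; by Proposition~\ref{prop:collision_fixed} (applied to any terminating extension of the current prefix of $ChSeq$) they collide, hence lie in the same component. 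Combined with the induction hypothesis this forces the components of $O_{n_1}$ and $O_{n_2}$ to coincide, closing the induction. Evaluating the invariant at the moment named in the statement then places $a$ and $a'$ in the same component, and the reduction concludes the proof.

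\emph{Main obstacle.} The delicate part is the egd case: tracing from the egd assignment, which is defined on the egd's own variables, back to the two s-t tgd assignments $p$ and $q$ whose head materializations supply the triggering $R$-atoms, checking that the replaced and replacing nulls genuinely sit in their images, and recognizing the alignment on the $A$-positions as precisely the interaction condition of Proposition~\ref{prop:collision_fixed} — all while allowing a single fact to be produced by several assignments and while handling the constant-replacement subcase separately. The remaining bookkeeping is routine.
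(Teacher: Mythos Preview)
Your proof is correct and follows essentially the same approach as the paper's: both argue by induction on the chase sequence that null-sharing implies connectivity in the collision graph, with the egd case handled by identifying the two s-t tgd assignments whose head materializations supply the triggering $R$-atoms and invoking Proposition~\ref{prop:collision_fixed} to conclude they collide. Your formulation via the owner sets $O_n$ and connected components is a clean repackaging of the paper's ``chain of collisions'' argument; the only cosmetic gap is that an fd-egd step may issue several replacements at once (one per position in $B$), but since all of them stem from the same interacting pair $p,q$ your per-replacement argument extends immediately.
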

\begin{proof}
We will start by showing that if two assignments have any nulls in common during $ChSeq$, then there exists an assignment sequence $a_1=a, a_2 \dots,a_p=a'$ s.t. all consecutive assignments $a_i$ and $a_{i+1}$ collide. We will hereafter refer to such sequence as a {\em chain of collisions}. 

First, note that prior to $ChSeq$, any distinct assignments $a$ and $a'$ (regardless of their collision) cannot hold any nulls in common. This is due to the construction of $a_{init}$ and $a'_{init}$, by extending body assignments with fresh nulls. 

We then show inductively on the chase sequence $ChSeq = s_1, \dots, s_n$ that the property holds. Assuming that it holds prior to a chase step $s_i$, we show that it holds after step $s_i$. If $s_i$ is a tgd step, then the same reasoning above applies, since all the nulls of the newly introduced assignment (the one picked at step $s_i$) are fresh and do not appear in the target set $T_{s_{i-1}}$.
 
Else, $s_i$ is an egd step. Let $a$ and $a'$ be two distinct assignments that have nulls in common after $s_i$. Two cases may occur:

\begin{itemize}
\item $a$ and $a'$ had nulls in common before $s_i$. Then according to our induction hypothesis there is a chain of collisions linking $a$ and $a'$ (since collision is lifecycle independent, it is not affected by $s_i$).
\item $a$ and $a'$ have nulls in common {\em only after} $s_i$. In this case, let $b$ and $b'$ be (the non-necessarily distinct) assignments whose interaction created the egd-application condition for $s_i$. Let $n_1, \dots, n_z$ be the nulls whose replacement {\em by nulls} is caused by $s_i$ and let $n'_1, \dots, n'_z$ be the nulls after replacement (note that we ignore nulls replaced by constants). Since collision occurs only after $s_i$, it follows that {\em prior to $s_i$} at least one $n_j$ or one $n'_j$ ($1<=j<=z$) is in the image of $a$, and similarly at least one $n_j$ or one $n'_j$ is in the image of $a'$. 

On the other hand, by egd step definition, {\em prior to $s_i$}, all $n_j$ and $n'_j$ are in the image of $b$ or $b'$. It follows that prior to $s_i$, $a$ has nulls in common with $b$ or $b'$, and the same holds for $a'$. Then, by induction hypothesis, there is a chain of collisions between $a$ and $b$, $a_1 = a,\dots\, a_{p_1}=b$, or a chain of collisions $a_1 = a,\dots\, a_{p_1}=b'$ between $a$ and $b'$. The exact same reasoning applies to $a'$, implying either a chain of collisions $a'_1 = a',\dots\, a'_{p_2}=b$ or $a'_1 = a',\dots\, a'_{p_2}=b'$. Note that since collision is lifecycle-independent, these chains of collisions hold both prior to and after $s_i$.

Finally, we note that, since $b$ and $b'$ interact, according to Proposition \ref{prop:collision_fixed}, $b$ and $b'$ collide. We can then produce the chain of collisions $a''_1=a,a''_2 =a_2, \dots, a''_{p_1}=b, a''_{p_1+1}=b', a''_{p_1+2} = a''_{p_2-1}$, $\dots, a''_{p_1+p_2+1}=a'$ (or the equivalent thereof, according to whether the chains corresponding to $a$ and $a'$ link these assignments to $b$ or $b'$). On all cases, it follows that $a$ and $a'$ are linked by a chain of collisions after $s_i$.
\end{itemize}

It is then straightforward to show that the existence of the chain of collisions implies our result. Indeed, wlog, let $S$ be a Saturation Set s.t. $a \in S$. By definition of Saturation Sets, it follows that $a_1 \in S$ (since, due to collision, $a_1$ cannot be in the complement of $S$), then $a_2 \in S$, \dots, then $a_p = a' \in S$.
\end{proof}

Finally, we show below a result useful for reasoning on correctness as well as revealing the practical interest of mask subsumption:
\begin{proposition} \label{prop:masks}
Let $msk_1$ and $msk_2$ be two conflict masks of size $n$ s.t. $msk_1 \succeq msk_2$. Let $a$ be an assignment s.t. $a$ matches $msk_2$ on a conflict area $ca = <(v_1, \dots, v_n), f>$. Then $a$ matches $msk_1$ on $ca$. 
\end{proposition}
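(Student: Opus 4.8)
The plan is to argue directly from the two definitions involved, namely mask subsumption and mask matching, checking the condition position by position. Fix an arbitrary index $i \in \{1,\dots,n\}$ and, following the definition of matching, assume $(msk_1)_i$ is a constant; it then suffices to establish that $a(v_i) = (msk_1)_i$ or $a(v_i) \in \Deltan$.

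First I would exploit the hypothesis $msk_1 \succeq msk_2$. By the definition of subsumption we have, for this $i$, either $(msk_1)_i = (msk_2)_i$ or $(msk_1)_i = *$; since we are in the case where $(msk_1)_i$ is a constant, the second alternative is ruled out, so $(msk_1)_i = (msk_2)_i$ and in particular $(msk_2)_i$ is itself a constant. The point of this step is to transfer the property ``carries a constant at position $i$'' from $msk_1$ to $msk_2$, which is exactly what is needed to be able to invoke the matching hypothesis on $msk_2$ at that position.

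Next I would apply the assumption that $a$ matches $msk_2$ on $ca$. Instantiating the definition of matching at position $i$---legitimate precisely because $(msk_2)_i$ was just shown to be a constant---yields $a(v_i) = (msk_2)_i$ or $a(v_i) \in \Deltan$. Substituting the equality $(msk_2)_i = (msk_1)_i$ obtained above gives $a(v_i) = (msk_1)_i$ or $a(v_i) \in \Deltan$, which is the desired conclusion for this $i$. As $i$ was arbitrary, $a$ matches $msk_1$ on $ca$.

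I do not expect any real obstacle: the statement is a routine unfolding of the two definitions, and the only subtlety is the ordering just highlighted---both matching conditions quantify only over positions bearing a constant, so one must first use subsumption to move ``being a constant'' from $msk_1$ to $msk_2$ before the matching hypothesis becomes usable. The proposition is nonetheless worth isolating, as it is what justifies the mask-subsumption pruning performed in Algorithm \ref{buildsatv1}: once a permissive mask has been exhausted for an fd, every mask it subsumes can be safely skipped.
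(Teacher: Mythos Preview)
Your proof is correct. The paper in fact states this proposition without proof, treating it as immediate from the definitions of mask subsumption and matching; your explicit position-by-position unfolding is therefore more detailed than what the paper itself provides, and your identification of the one small subtlety (that subsumption must first be used to transfer ``being a constant'' from $msk_1$ to $msk_2$ before the matching hypothesis on $msk_2$ can be invoked) is exactly right.
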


\noindent \textbf{Individual chase of Saturation Sets}. Relying on the concepts and results above, we are now ready to start the proof of our correctness theorem \ref{TH:COR}. We will begin by the first main proof element, that is, we will show that Saturation Sets can be chased {\em independently} and the results {\em pieced together} so as to obtain a solution:

\begin{lemma}\label{lemma:point1}
Let $A$ be the complete assignment set for a Data Exchange Scenario and $A_1, \dots, A_m$ be a complete partition of $A$ into $m$ disjoint Saturation Sets, that is $\bigcup_{i=1}^{m}{A_i}=A$ and $\forall i<>j, A_i \cap A_j = \emptyset$.

Let $ChSeq_1 = s^{1}_1,\dots, s^{1}_{n_1}, \dots, ChSeq_m=s^{m}_1, \dots, s^{m}_{n_m}$ be terminating successful chase sequences on $A_1, \dots, A_m$.

Then  there exists a terminating successful chase sequence $ChSeq = s_1,\dots, s_n$ on $A$ such that $J_{s_n} = \bigcup_{i=1}^{m}{J_{s^{i}_{n_i}}}$
\end{lemma}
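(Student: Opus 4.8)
The guiding intuition is that disjoint Saturation Sets do not interfere during chasing, so I would simply take $ChSeq$ to be the \emph{concatenation} $ChSeq_1\cdot ChSeq_2\cdots ChSeq_m$: first all steps of $ChSeq_1$, then all of $ChSeq_2$, and so on. The first thing to check is that this concatenation is a legal chase sequence on $A$: each tgd step of $ChSeq_i$ picks an assignment of $A_i$, and since the $A_i$ are pairwise disjoint and each $ChSeq_i$ picks every assignment of $A_i$ exactly once, no assignment is ever picked twice. The subtle part is the egd steps, handled next.

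The core of the argument is an induction over the blocks establishing the invariant: \emph{after executing $ChSeq_1,\dots,ChSeq_{i-1}$ and the first $k$ steps of $ChSeq_i$, the target assignment set is the union of the final forms $T_{s^{1}_{n_1}},\dots,T_{s^{i-1}_{n_{i-1}}}$ of the earlier blocks together with the form the $A_i$-assignments have after $k$ steps of the standalone run of $ChSeq_i$}, assignments not yet picked being absent. For the induction step on an egd step $s^{i}_{k}$: in the standalone run this step is triggered by some egd and some assignment whose witnessing facts come from materializing $A_i$-assignments, and (being non-failing) it replaces a null $v_1$ by $v_2$; since $A_i$ is a Saturation Set and $v_1$ occurs in the image of an $A_i$-assignment, Proposition~\ref{prop:nulls} guarantees $v_1$ occurs \emph{only} in images of $A_i$-assignments. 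Hence, executed in the concatenated context, the step has the very same trigger and the same values (the $A_i$-part of the target set being identical to the standalone run by the induction hypothesis), stays non-failing, and leaves the finished blocks frozen. Taking $i=m$, $k=n_m$ yields $T_{s_n}=\bigcup_{i=1}^m T_{s^{i}_{n_i}}$, hence $J_{s_n}=\bigcup_{i=1}^m J_{s^{i}_{n_i}}$ after head materialization.

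It remains to verify that $ChSeq$ is \emph{terminating and successful}. Condition (i), exactly one tgd step per assignment of $A$, is immediate from the disjoint partition. For condition (ii), suppose an egd still applied on $T_{s_n}$, witnessed by two (possibly equal) assignments $a,a'$ whose materializations supply the offending facts. If $a,a'$ lie in the same part $A_i$, the same egd would apply on $T_{s^{i}_{n_i}}$, contradicting that $ChSeq_i$ is terminating successful. Otherwise $a\in A_i$, $a'\in A_j$ with $i\neq j$; then $a$ and $a'$ interact on that fd, so by the chase-sequence independence of collision (Proposition~\ref{prop:collision_fixed}) they collide, contradicting $a'\notin A_i$ for the Saturation Set $A_i$. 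So no egd applies on $T_{s_n}$ and $ChSeq$ is terminating successful.

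\noindent\textbf{Main obstacle.} The crux is the egd induction step: a priori an egd chase step rewrites a null ``in all images of assignments in the current target set,'' which also contains the already-finished blocks, so one must rule out any reach-back into a previous block. Proposition~\ref{prop:nulls} is exactly the lever for this, and deploying it correctly — together with the observation that a surviving cross-block egd trigger would force a collision between assignments placed in different Saturation Sets — is the heart of the proof; the rest is bookkeeping.
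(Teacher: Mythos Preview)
Your proposal is correct and follows essentially the same approach as the paper: concatenate the $ChSeq_i$ in order, maintain by induction the invariant $T_{s_i}=\bigcup_{l<k}T_{s^{l}_{n_l}}\cup T_{s^{k}_j}$, use Proposition~\ref{prop:nulls} to show egd steps cannot touch earlier blocks, and finish termination via Proposition~\ref{prop:collision_fixed} to rule out cross-block egd triggers. The decomposition, the invariant, and the two auxiliary propositions invoked all match the paper's proof.
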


\begin{proof}
To prove the statement above, we will construct a chase sequence on $A$ by {\em appending} the individual chase sequences on $A_i$. Let thus $ChSeq = s^{1}, \dots, s^{n}$ where:
\begin{itemize}
\item $n=\sum_{i=1}^{m}{n_i}$
\item $s_i$ = $s^{k}_{j}$ where 
\begin{itemize}
\item $k$ is the unique indice of Saturation Set respecting $\sum_{l=1}^{k}{n_l}>=i$ and $\sum_{l=1}^{k-1}{n_l}<i$.
\item $j=i - \sum_{l=1}^{k-1}{n_l}$.
\end{itemize}
\end{itemize}

We will first show that for every step $s_i = s^{k}_{j}$, $T_{s_i} = \bigcup_{l=1}^{k-1}T_{s^{l}_{n_l}}$ $\cup$ $T_{s^{k}_j}$.  We will proceed inductively on the chase sequence $ChSeq$. Two cases can occur:
\begin{itemize} 
\item $s_i$ is a tgd chase step. If $j=1$ then $s_i$ is the beginning of a new append operation, concerning the Saturation Set $A_k$. Denoting by $a$ the assignment picked by this step we have $T_{s^{k}_1}$ = $\{a_{init}\}$. Assuming our inductive hypothesis holds for $s_{i-1}$ this means that $T_{s_{i-1}} =\bigcup_{l=1}^{k-1}{T_{s^{l}_{n_l}}}$. We then have $T_{s_i} =  T_{s_{i-1}} \cup \{a_{init}\}$ = $\bigcup_{l=1}^{k-1}{T_{s^{l}_{n_l}}}$ $\cup$ $T_{s^{k}_j}$. We apply a similar reasoning for $j>1$ by noting that in this case $T_{s^{k}_j}$ = $T_{s^{k}_{j-1}} \cup \{a_{init}\}$.
\item else, $s_i$ must be an egd chase step. Note that in this case $j>1$, since all chase sequences begin with a tgd chase step. By definition of this chase step as being originally operated on $A_k$ and by our inductive hypothesis, for every assignment $a$ in $T_{s^k_j}$, $a_{s_i} = a_{s^{k}_{j}}$. We finish by showing that for all assignments $a'$ that do not belong to $A_k$, $a'_{s_i} = a'_{s_{i-1}}$, in other words, that these assignments {\em are not affected by $s_i$}. 

Indeed, note that since $s_i$ is not a failed chase step, its effect is that of replacing nulls \{$n_1, \dots, n_z$\} (by either nulls or constants). The nulls that will be replaced are in the image of some assignment in $T_{s^k_j}$ (thus in $A_k$) that has created the egd application conditions. Then by Proposition \ref{prop:nulls} it follows directly that these nulls are {\em contained and specific to} the Saturation Set $A_k$, i.e. there exists no assignment $a'$ that (i) does not belong to $A_k$ and that (ii) has in its image one of these nulls. As a consequence, any assignment $a'$ that does not belong to $A_k$ is not affected by the the replacement, and therefore $a'_{s_i} = a'_{s_{i-1}}$.
\end{itemize}

We have thus proved that $T_{s_n} = \bigcup_{l=1}^{m}{T_{s^{l}_{n_l}}}$, which implies that $J_{s_n} = \bigcup_{l=1}^{m}{J_{s^{l}_{n_l}}}$ by definition of head materialization. To complete our proof, we further show that $ChSeq$ is a terminating chase sequence on $A$. Note that, since $A_1,\dots, A_m$ is a complete partitioning of $A$, all tgd steps have been applied. 

We next show that no egd step further applies on $T_{s_n}$. Indeed, assume that an egd chase step still applies. This means that there exist two assignments $a$ and $a'$ s.t. $a_{s_n}$ and $a'_{s_n}$ interact. By Proposition \ref{prop:collision_fixed} it then means that $a$ and $a'$ belong to the same saturation set $S_k$. But this contradicts the fact that all chase sequences $ChSeq_i$ are terminating successful sequences. Therefore, no two assignments $a$ and $a'$ can interact after $ChSeq$, hence $ChSeq$ is terminating, thus completing our proof.

\end{proof}

\noindent \textbf{Output of our algorithms}. We move now to the second main element of our correctness theorem: the fact in case of non-failed runs, the sets of assignments constructed by the $BuildAndChaseSat\-urationSet$ subroutine (in short $BACSS$) of our algorithms (thus, for both the \ouralgo\ and the \ouralgo\ With Parallelization) provide a complete partitioning of the assignment set into disjoint Saturation Sets, and that further each $BACSS$ output is the target set corresponding to a terminating chase sequence on the given constructed Saturation Set. We will indeed show that the following holds:

\begin{lemma}\label{lemma:point2}
Let $A$ be the complete assignment set for a data exchange scenario and 
let $J_{IC/P}$ be the result of a successful run of either the \ouralgo\ or \ouralgo\ With Parallelization for the given scenario.
Then there exist sets $S_1, \dots S_m, S_i \subseteq A$ and terminating successful chase sequences $ChSeq_1 = s^{1}_1,\dots, s^{1}_{n_1}, \dots, ChSeq_m=s^{m}_1, \dots, s^{m}_{n_m}$ on $S_1, \dots, S_m$ such that:
\begin{itemize}
\item $S_i$ are pairwise disjoint and provide a complete partitioning of $A$, that is, $\forall i<>j, S_i \cap S_j = \emptyset$ and $\bigcup_{i=1}^{m}{S_i} = A$
\item $J_{IC/P} = \bigcup_{i=1}^{m}{J_{s^{i}_{n_i}}}$
\item $S_1, \dots, S_m$ are all Saturation Sets.
\end{itemize}
\end{lemma}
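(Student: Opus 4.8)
The plan is to let $S_i$ denote the assignment set produced by the $i$-th invocation of $BACSS$ (Algorithm \ref{buildsatv1}); for the \ouralgo\ With Parallelization, the $i$-th invocation within a fixed thread, the parallel case being addressed at the end. Three of the four required properties are bookkeeping. Each assignment $a$ is removed from $A_{Tgd(a)}$ exactly when the macro $AddAndProcessAssignment$ inserts it into the current $\Ssat$, it is inserted into exactly one $\Ssat$ over the whole run, and the main loop of Algorithm \ref{mainloopv0} iterates until every $A_m$ is empty; hence the $S_i$ are pairwise disjoint and their union is $A$. Since the main loop performs $J \gets J \cup Materialize(\Ssat)$ after each invocation, $J_{IC/P} = \bigcup_i Materialize(S_i) = \bigcup_i J_{s^i_{n_i}}$ by the definition of head materialization. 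Finally, the steps executed during the $i$-th invocation form a chase sequence $ChSeq_i$ on $S_i$ in the sense of the preamble: its tgd steps are exactly the insertions into $\Ssat$, its egd steps are the ones issued by $ApplyEgdsToTermination$, and none of the latter fails because the run is successful. This $ChSeq_i$ is a terminating successful chase sequence on $S_i$: every assignment of $S_i$ has exactly one tgd step, and by an easy induction $\Ssat$ is egd-saturated after every call to $AddAndProcessAssignment$ (after an insertion the only potentially applicable egds involve facts affected by that insertion, which is precisely what $ApplyEgdsToTermination$ is run to termination on), so no egd applies on the final $S_i$.

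The substantive part is showing that each $S_i$ is a Saturation Set, which I would prove by strong induction on $i$ in invocation order. Assume $S_1, \dots, S_{i-1}$ are Saturation Sets and, for contradiction, that $S_i$ is not: there are $a \in S_i$ and $a'$ that collide, with $a' \notin S_i$. By the partition property $a' \in S_j$ for some $j \neq i$. If $j < i$, then since $a$ and $a'$ collide and $a \notin S_j$, the pair $(a', a)$ violates the Saturation-Set property of $S_j$, contradicting the induction hypothesis; hence $j > i$, so $a'$ is picked only during a later invocation and therefore remains in $A_{Tgd(a')}$, unmodified, throughout the $i$-th invocation. Consider the step of the $i$-th invocation at which $a$ is popped from $NewAssignments$ and processed. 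Collision makes $a$ and $a'$ distinct, and by Proposition \ref{prop:overlap_collision} they overlap in their then-current forms; by Proposition \ref{prop:conflict-overlap} this overlap is witnessed by an fd $f$ and a conflict $\langle ca_1, ca_2 \rangle$ between $Tgd(a)$ and $Tgd(a')$, where $ca_1 \in Areas(Tgd(a))$, $ca_2 \in Areas(Tgd(a'))$, and $a'$ matches $msk := mask_a^{ca_1}$ on $ca_2$.

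I would then show that $a'$ must already have been added to $S_i$. In the loop of Algorithm \ref{buildsatv1} over $Areas(Tgd(a))$ the area $ca_1$ is visited and $msk$ is built. If no $\langle msk', f\rangle \in UsedMsk$ with $msk' \succeq msk$ exists, then the search with $msk$ inspects $A_{Tgd(a)}$ for the trivial conflict on $ca_1$ and, for every neighbour $m'$ of $Tgd(a)$ and every conflict area $\langle cv', f\rangle$ of $m'$, the set $A_{m'}$. Since $\langle ca_1, ca_2\rangle$ is a conflict on $f$, it is either trivial, so $Tgd(a') = Tgd(a)$ and $ca_2 = ca_1$ and the trivial-conflict branch handles it, or it corresponds to a Conflict-Graph edge (possibly a self-loop), so $Tgd(a')$ is a neighbour of $Tgd(a)$ and $ca_2$ is among the inspected areas; either way, as $a'$ matches $msk$ on $ca_2$ and is still available, it is added to $S_i$ --- contradiction. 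Otherwise some $\langle msk', f\rangle \in UsedMsk$ subsumes $msk$; it was recorded earlier in the same invocation while processing some $b \in S_i$ on a conflict area $ca_b = \langle cv_b, f\rangle$, with $msk' = mask_b^{ca_b}$. By Proposition \ref{prop:masks}, $a'$ matches $msk'$ on $ca_2$, and $\langle ca_b, ca_2\rangle$ is a conflict on $f$, so the same dichotomy puts $Tgd(a')$ among the tgds inspected during the $msk'$-search; that search, run while $a'$ was still available, would have added $a'$ to $S_i$ --- again a contradiction. Hence $S_i$ is a Saturation Set. (For Algorithm \ref{buildsatv0} the argument is strictly simpler, that version computing the overlap-closure fixpoint directly.)

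For the \ouralgo\ With Parallelization I would add that colliding assignments overlap (Proposition \ref{prop:overlap_collision}), hence conflict, hence their tgds lie in the same connected component of the Conflict Graph and are processed by a single thread, so the induction above applies verbatim to that thread's invocation order. I expect the mask-subsumption branch to be the main obstacle: one must be certain that a recorded, more permissive mask genuinely implies the sought assignment has already been consumed, which is exactly where Proposition \ref{prop:masks} and the fact that each search ranged over \emph{all} conflict areas for $f$ of the home tgd and all its neighbours are indispensable. The self-loop / trivial-conflict case split, and the remark that $a$ and $a'$ possibly being in modified forms is harmless since Proposition \ref{prop:overlap_collision} holds at every lifecycle stage, are minor but necessary checks.
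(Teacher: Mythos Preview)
Your proposal is correct and follows essentially the same approach as the paper's proof: identify the $S_i$ with the outputs of the successive $BACSS$ calls, dispatch the bookkeeping claims (disjointness, covering, $J_{IC/P}=\bigcup_i J_{s^i_{n_i}}$, termination) from the algorithm's structure, and then argue via Propositions~\ref{prop:overlap_collision}, \ref{prop:conflict-overlap}, and \ref{prop:masks} that any $a'$ colliding with some $a\in S_i$ must have been picked during the same $BACSS$ call, splitting on whether the relevant mask was fresh or subsumed by a previously used one. The only stylistic difference is that you frame the Saturation-Set argument as a strong induction on $i$ (using the hypothesis to exclude the $j<i$ case), whereas the paper argues it directly for all sets at once by assuming without loss of generality that $a$ is picked before $a'$; the two are logically equivalent, and your handling of the parallel version and of the self-loop/trivial-conflict dichotomy matches the paper's.
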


\begin{proof}
As sketched above, we are going to take $S_1, \dots$, $S_m$ to be the assignment sets constructed by the calls to $BACSS$ and $ChSeq_1, \dots, ChSeq_m$ to be the chase sequences operated by this subroutine.

First, note that fact that the chase sequences are terminating follows directly from the correctness of the $Apply\-EgdsToTermination$ method and the fact that a tgd step is applied for any assignment added to a constructed set. Also, $J_{IC/P}$ corresponds to the union of the respective chase results by construction, i.e. line \ref{addsat} in Algorithm \ref{mainloopv0} and its equivalent for the main loop of \ouralgo\ With Parallelization. Note further that, upon being picked during some call to $BACSS$, an assignment becomes unavailable for subsequent picking, thus the sets built by $BACSS$ are disjoint. Finally, note that the algorithm(s) will loop until complete exhaustion of the assignment set for the input scenario, thus a complete partitioning is always provided.

We will hereafter focus on showing that calls to $BACSS$ {\em do indeed construct Saturation Sets}. In particular, we are going to show that for every two assignments $a$ and $a'$, if $a$ and $a'$ collide then there exists $S_i$ s.t. $a \in S_i$ and $a' \in S_i$, that is, the two assignments belong to the same constructed set. Since according to the above all assignments belong to one and exactly one constructed set, this in turn will ensure, by Saturation Set definition, that all constructed sets are Saturation Sets.

We start by noting that since $a$ and $a'$ collide, their corresponding rules $m_a$ and $m_{a'}$ {\em belong to the same connected component} of the Conflict Graph. Indeed, by Proposition \ref{prop:overlap_collision}, it follows that $a$ and $a'$ overlap, which by Proposition \ref{prop:conflict-overlap} further implies conflict between the corresponding rules and thus existence of a graph edge. We can thus restrict our attention to strictly one of the parallel branches of the \ouralgo\ With Parallelization and reason in a sequential way for both our algorithms.

Without loss of generality, we can then assume that $a$ is picked {\em} before $a'$, i.e. that when $a$ is picked $a'$ is still available. Let $BACSS_i$ be the call to $BACSS$ that picks $a$ and $S_i$ the set constructed by this call. We will show that $BACSS_i$ {\em also picks $a'$} before exiting, and that therefore $a' \in S_i$. 

We first note that during the execution of $BACSS_i$, $a'$ can only be in one of the following situations: (i) added to $S_i$ (by $BACSS_i$) or still available. This is indeed a consequence of the sequential reasoning shown above, i.e. of the fact that in the case of the \ouralgo\ With Parallelization $a'$ will always be picked on the same branch than the one on which $BACSS_i$ is executing. 

We next consider the point in the execution of $BACSS_i$ where $a$ is selected from the frontier set (recall that this is the $NewAssignments$ set) at line \ref{picknewfrontasg} in Algorithm \ref{buildsatv1}. We will hereafter refer to this point as the {\em expansion} of $a$. Note that such point always occurs, since $a$ has been added to $S_i$ therefore it will be considered for expansion at some point before leaving the frontier set. 

At this point in the execution of $BACSS_i$, two cases may occur: (i) $a'$ has already been added to $S_i$, in which case our proof is completed, or (ii) $a'$ is still available. We thus focus on the second case. 

Since $a$ and $a'$ collide, it follows by Proposition \ref{prop:overlap_collision} that $a$ and $a'$ overlap. Let $<v_1, v'_1>, \dots, <v_n, v'_n>$ be the term pairs and $f$ the functional dependency corresponding to (one of the) overlap(s) of $a$ and $a'$. By Proposition \ref{prop:conflict-overlap} it follows that $m_a$ and $m_{a'}$ conflict on $ca = <(v_1,\dots,v_n), f>$ and $ca'=<(v'_1,\dots,v'_n), f>$ and $a'$ matches $mask_{a}^{ca}$ on $ca'$. By definition of the Conflict Graph, $ca$ then occurs as an adornment of the vertex corresponding to $m_a$. There is then always a run of the loop on lines \ref{startsearch}-\ref{endsearch} for $ca$. Two cases can in principle occur:
\begin{itemize}
\item $mask_{a}^{ca}$ has not been used with $f$. In this case, if $ca=ca'$ and $m_a=m_{a'}$ (thus $<ca, ca'>$ is a trivial conflict), and since $a'$ is still available according to our assumption above, the loop on line \ref{checktrivial} will add $a'$ to $S_i$. Else $<ca, ca'>$ is a non-trivial conflict, therefore by definition it is accounted for by the Conflict Graph, i.e. there is an edge between the nodes corresponding to $m_a$ and.$m_{a'}$ and $ca'$ is among the adornments of the node corresponding to $m_{a'}$. The loop on line \ref{getneighbours} will then consider $m_{a'}$ and $ca'$ and, since $a'$ matches the mask as shown above and is available, $a'$ will be added to $S_i$, and thus upon the exit of $BACSS_i$, $a' \in S_i$.
\item  A mask subsuming $mask_{a}^{ca}$ has already been used with $f$. We show hereafter that this case cannot occur under the assumptions above, i.e. that $a'$ is still available when $a$ is considered for expansion. Indeed, let $msk'$ be the mask subsuming $mask_{a}^{ca}$, that has already been registered as used in conjunction with $f$. Then there must have been an assignment $a''$ of rule $m_{a''}$ such that $msk'$ comes from a conflict area $ca'' = <(v''_1,\dots,v''_n), f>$ of $m_{a''}$. This means that (i) $<ca',ca''>$ is a conflict between $m_{a'}$ and $m_{a''}$ and that furthermore (ii) $a'$ matches $msk'$ on $ca'$. Indeed, the last point is a direct consequence of Proposition \ref{prop:masks} and the fact that $a'$ matches $mask_{a}^{ca}$ on $ca'$ and $msk'\succeq mask_{a}^{ca}$. Reasoning similarly as above, it then follows that the search due to $a''$ must have added $a'$ to $S_i$, or else $a'$ was already in $S_i$ prior to the expansion of $a''$. Either way, this means that $a'$ is already in $S_i$ when $a$ is considered for expansion, which contradicts our assumption of $a'$ being still available at the same execution stage. 
\end{itemize}

\end{proof}

\noindent \textbf{Chase failure cases}. Note that in the above reasoning on correctness we have not yet considered chase failure cases. These can in fact be handled straightforwardly by noting that the following holds:
\begin{proposition}\label{prop:failsubset}
Let $A$ be an assignment set and $S \subseteq A$. Let $ChSeq=s_1, \dots, s_n$ be a failed chase sequence on $S$. Then $ChSeq$ is a failed chase sequence on $A$.
\end{proposition}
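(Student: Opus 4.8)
The plan is to show that, read as a fixed list of steps, $ChSeq$ behaves \emph{identically} whether we regard the ambient assignment set as $S$ or as the larger set $A$; in particular its final, failing egd step fails in both readings. The key observation is that the applicability condition of a chase step and its effect depend only on the current target assignment set $T_{s_{i-1}}$ (equivalently, on its head materialization $J_{s_{i-1}}$) and, for tgd steps, on which assignments have already been picked; none of these quantities is affected by enlarging the ambient set from $S$ to $A$, since $S \subseteq A$.

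Concretely, I would first fix notation, writing $T^{S}_{s_i}$ and $T^{A}_{s_i}$ for the target sets obtained after $s_1,\dots,s_i$ when $ChSeq$ is read as a sequence on $S$ and on $A$ respectively, and then prove by induction on $i$ that: (i) $s_i$ is a legal step in the $A$-reading, (ii) $T^{A}_{s_i}=T^{S}_{s_i}$, and (iii) $s_i$ is a failing step in the $A$-reading iff it is failing in the $S$-reading. The base case $i=0$ is immediate since both target sets are empty. For the inductive step, if $s_i$ is a tgd step picking an assignment $a$, then $a\in S\subseteq A$, and ``$a$ has not been picked by $s_1,\dots,s_{i-1}$'' holds in the $A$-reading exactly as in the $S$-reading because the prefix of steps is the same; hence $s_i$ is legal on $A$ and, by the induction hypothesis, $T^{A}_{s_i}=T^{A}_{s_{i-1}}\cup\{a_{init}\}=T^{S}_{s_{i-1}}\cup\{a_{init}\}=T^{S}_{s_i}$. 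If instead $s_i$ is an egd step, both its triggering condition (the existence of an egd assignment satisfied by $J_{s_{i-1}}$ demanding a nontrivial equality) and the resulting list of value replacements are determined by $J_{s_{i-1}}$ alone, hence coincide in the two readings by the induction hypothesis; applying the same replacements to the same target set yields $T^{A}_{s_i}=T^{S}_{s_i}$, and the step fails in one reading precisely when some replacement targets a constant, again a property of the common $T_{s_{i-1}}$.

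To finish, I would invoke the induction to conclude that $s_1,\dots,s_{n-1}$ are tgd steps or non-failing egd steps on $A$, whereas $s_n$ is an egd step whose replacement list --- computed from the common target set $T_{s_{n-1}}=T^{S}_{s_{n-1}}=T^{A}_{s_{n-1}}$ --- attempts to replace a constant by a different term, so $s_n$ is a failed egd step on $A$ as well; by definition this makes $ChSeq$ a failed chase sequence on $A$. I do not expect a genuine obstacle: the proposition is essentially an unwinding of the definition of Oblivious Chase steps together with their ``locality'', i.e. the fact that a step only ever inspects and modifies already-picked assignments. The only thing requiring care is to pin down exactly which data the step semantics depend on and to carry the invariant $T^{A}_{s_i}=T^{S}_{s_i}$ cleanly through the induction, so that the failure of $s_n$ transfers automatically.
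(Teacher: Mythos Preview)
Your proof is correct. The paper does not actually prove this proposition; it merely states it, prefacing it with the remark that chase failure cases ``can in fact be handled straightforwardly by noting that the following holds.'' Your induction on the length of the sequence, maintaining the invariant $T^{A}_{s_i}=T^{S}_{s_i}$ and observing that each step's applicability and effect depend only on the current target set (plus, for tgd steps, membership in the ambient assignment set, which is preserved since $S\subseteq A$), is exactly the straightforward unwinding the paper alludes to but omits.
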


\noindent \textbf{Putting it all together}. The results presented above now allow us to synthetically complete the proof of Theorem \ref{TH:COR}. Indeed, Lemma \ref{lemma:point2} allows us to infer that any successful run of our algorithm creates the conditions of application of Lemma \ref{lemma:point1}. Furthermore:
\begin{itemize}
\item If the Oblivious Chase fails (i.e. any Oblivious Chase sequence fails), then our algorithms will exit with failure on any run. Indeed, if there existed a successful run, by Lemmas\ref{lemma:point2} and \ref{lemma:point1} we would be able to provide a successful terminating chase sequence.
\item If the Oblivious Chase does not fail (i.e. no Oblivious Chase sequence fails), then our algorithms cannot exit with failure. Indeed, if a failed run was possible, then by Proposition \ref{prop:failsubset} we would be able to provide a failed chase sequence (by using the failed chase sequence of the $BACSS$ call that raised the {\em ChaseFail} exception). 

Since our algorithms succeed, let $J_{IC/P}$ be the output of a run of either version of our algorithms, that is, the \ouralgo\ or the \ouralgo\ With Parallelization. By Lemmas \ref{lemma:point2} and \ref{lemma:point1}, there exists a successful Oblivious Chase sequence s.t., denoting by $J_{seq}$ its result, $J_{seq} = J_{IC/P}$. Using the same shortcut of terminology as in our Theorem statement and relying on Proposition \ref{prop:iso}, we then conclude that $J_{IC/P}$ is isomorphic to the result of the Oblivious Chase, thus completing the proof of Theorem           \ref{TH:COR}
\end{itemize}
\end{document}